\documentclass[conference]{IEEEtran}
\IEEEoverridecommandlockouts
\usepackage{cite}
\usepackage{amsmath,amssymb,amsfonts}
\usepackage{algorithm}
\usepackage{algorithmic}
\usepackage{graphicx}
\usepackage{textcomp}
\usepackage{xcolor}
\usepackage{soul}
\usepackage{url}
\usepackage{subcaption}
\usepackage{placeins}

\usepackage{xspace}
\usepackage{makecell}
\usepackage{etoolbox}
\usepackage{amsthm}
\usepackage{bbm}
\usepackage{stackengine}[2013-10-15]
\usepackage{booktabs}   
\usepackage{multirow}   
\usepackage{makecell}   

\newtheorem{lemma}{Lemma}
\newtheorem{corollary}{Corollary}
\theoremstyle{definition}

\newcommand{\argminimal}[1]{\ensuremath{\underset{#1}{\arg\min}}\xspace}

\newcommand{\globalcostfunc}{\ensuremath{g}\xspace}

\newcommand{\unfairnessweight}{\ensuremath{\alpha}\xspace}
\newcommand{\localcostweight}{\ensuremath{\beta}\xspace}
\newcommand{\globalcostweight}{\ensuremath{\gamma}\xspace}

\newcommand{\iepos}{I-EPOS\xspace}

\def\BibTeX{{\rm B\kern-.05em{\sc i\kern-.025em b}\kern-.08em
    T\kern-.1667em\lower.7ex\hbox{E}\kern-.125emX}}
\begin{document}

\title{The Optimization Trilemma: \\Efficiency, Comfort and Fairness \\in Decentralized Multi-agent Coordination\\
}
\author{
	\IEEEauthorblockN{
		Jovan Nikol{\'i}c\IEEEauthorrefmark{1}, Maciej Krzysztof Zuziak\IEEEauthorrefmark{2}, Evangelos Pournaras\IEEEauthorrefmark{2}\IEEEauthorrefmark{3}
	}\\
	\IEEEauthorblockA{\IEEEauthorrefmark{1} Google, Zurich, 8002 Zurich, Switzerland. E-mail: jovannikolic@google.com}
	\IEEEauthorblockA{\IEEEauthorrefmark{2} School of Computer Science, University of Leeds, Leeds, UK. E-mail: \{M.Zuziak,E.Pournaras\}@leeds.ac.uk}
	\IEEEauthorblockA{\IEEEauthorrefmark{3} School of Energy Systems, LUT University, 53850 Lappeenranta, Finland}
}

\maketitle

\begin{abstract}
The problem of fair multi-agent coordination in decentralized settings is one of the most pressing challenges for building efficient collaborative systems. Resource allocation is based on optimized collective arrangements accounting for agents' needs. Such coordination should not only be computationally efficient but also account for fairness, i.e., equitable redistribution of costs incurred by all agents. Recent literature has proposed several algorithms that efficiently determine optimal plan combinations balancing system-wide efficiency and individual discomfort of agents in a centralized setting. However, these works do not address equitable resource optimization in fully decentralized scenarios, specifically, the optimized redistribution of discomfort among coordinating agents so that none experiences a discomfort level that could lead to loss of incentive or polarization that can disrupt planned operations. In this work, we study the problem of optimizing three objectives: (i) system-wide efficiency, (ii) individuals' comfort and (iii) fairness (i.e., balancing of incurred discomfort costs) in decentralized multi-agent coordination. We design a novel model to optimize those three orthogonal objectives, without any substantial increase in communication and computational overhead. Through experiments on two real-world datasets, we validate the model and demonstrate that it can achieve fairer optimization outcomes, while satisfying agents' preferences and system goals. 
\end{abstract}

\begin{IEEEkeywords}
Decentralized Optimization, Combinatorial Optimization, Fairness, Multi-Agent Systems, Multi-Agent Optimization, Coordination, Collective Learning
\end{IEEEkeywords}

\section{Introduction}\label{Section:Introduction}

Efficient use of resources~\cite{hollands2008} with active participation and cooperation of citizens in collective decision-making~\cite{deakin2013} are fundamental characteristics of sustainable complex systems, in which resource allocation is most effective when it accounts for the dynamically evolving needs and preferences of participants. Efficient resource consumption can be achieved via cooperation: prompt, impactful decisions resulting from collaboration among individuals that adapt to dynamic environmental conditions, saving resources and reducing costs. For example, in bike-sharing systems, costs of manual bike relocations can be reduced through cyclists' cooperation, by choosing the station to pick up and which to park at, based on others' choices. Electricity bills of households connected to a Smart Grid can be reduced by collaborative appliance scheduling during periods when electricity prices are lower, e.g., at night or when renewable energy availability is higher. 

The notion of \textit{efficient resource use} may differ between the individual and system perspectives, leading to discrepancies in the level of discomfort experienced by particular agents. For bike-sharing operators, efficiency means minimal manual bike relocation costs; from cyclists' perspectives, it means picking up a bike at the nearest bike station. In the Smart Grid scenario, scheduling appliances at times when it is unsuitable for residents is inefficient from their perspective, but it reduces the risk of a blackout. Such a discrepancy may lead to substantial unfairness in the system, in which optimization towards only the global objective results in some agents incurring higher participation costs than others. This can yield to rebound effects with dropouts that can also in turn influence system-wide efficiency.

Previous work on combinatorial planning in decentralized multi-agent settings has considered either single-objective~\cite{hinrichs2013, pournaras2010} or multi-objective~\cite{pournaras2018} scenarios. However, none of the works to date has considered fairness in the distribution of incurred costs across agents, so that no agent suffers disproportionate discomfort, or the system creates an imbalanced allocation that discourages some agents from participation. Since the system assumes collaboration among agents, those who are treated unfairly may lack a rational incentive to continue collaborating~\cite{zuziak_data_2023,Khalid2026}. On the other hand, even agents which are placed in an advantageous position may be unsatisfied with the outcome of the allocation, either because of the \textit{inequity aversion}~\cite{dannenberg_inequity_2007} or reciprocity principles \cite{fehr_fairness_2000}.

The minimization of unfairness in decentralized systems is further perplexed by the lack of access to complete information about the distribution of discomfort costs among agents, which can be reconstructed only from messages exchanged between them. In line with this, we design a multi-agent system that efficiently manages trade-offs in a distributed manner across multiple objectives: (i) \emph{system-wide efficiency} (global\footnote{The terms \textit{inefficiency} and \textit{global cost} are used interchangeably.} inefficiency cost), (ii) \emph{comfort} (local\footnote{Similarly, the terms \textit{discomfort} and \textit{local cost} are used interchangeably.} discomfort cost), and (iii) the \emph{fairness} of agents' selections, by reusing messages already passed by the agents to optimize the primary system-wide objective (i.e. efficiency), thereby significantly reducing communication bottleneck. Subsequently, we empirically analyze the impact of the proposed model on real-world datasets, demonstrating the effect of unequal distribution of discomfort costs among agents and how the proposed framework can steer the system towards more equitable optimization solutions.

\subsection{Contribution}

This work establishes the following contribution:
\begin{enumerate}
  	 \item A model for fairness that augments distributed multi-objective optimization, accounting for three, often orthogonal, goals: system-wide efficiency, comfort, and fairness in how discomfort is distributed. 
    \item A cost-effective application of the model for fairness to the collective learning approach of I-EPOS, the \emph{iterative Economic Planning and Optimized Selections}~\cite{pournaras2018,Pournaras2020}, expanding the optimization capabilities with negligible communication and computational cost. 
    \item New insights from an empirical evaluation of the framework using one synthetic and two real-world datasets on how the unequal distribution of agents' discomfort costs influence the system and how the proposed model can yield optimized solutions that are fairer for agents.
    \item An open-source software implementation\footnote{Available at \url{https://github.com/epournaras/EPOS} (last accessed: July 2026).} into the EPOS software release for further experimentation, research and applicability by the broader community.
\end{enumerate}

In the light of this contributions, we make the following advances to state of the art~\cite{zhang_fairness_2014, jiang_learning_2019, kumar_decaf_2025,Wang2026,Zimmer2021,Aloor2024,Chouaki2026}: (i) The proposed model can scale to thousands of agents, where the impact of inequalities is becoming profound. (ii) The communication and computational cost is negligible given that the optimization of fairness is incepted as a built-in element of existing general-purpose multi-agent coordination and therefore, no additional mechanism is required to bridge the information gap in decentralized settings. (iii) Fairness can be co-optimized jointly and adaptively with system-wide efficiency and individuals' comfort, providing a modular and reconfigurable multi-objective optimization for large-scale decentralized multi-agent systems.

\subsection{Structure of the Work}

The rest of the work is structured as follows: Section~\ref{Section:Related_Work} illustrates the most relevant approaches to single and multi-objective combinatorial optimization, including multi-agent approaches that account for fairness. Section~\ref{Section:Methodology} formulates the multi-agent combinatorial optimization problem, while Section~\ref{Section:Decentralised_Optimisation} introduces the model for optimizing fairness in decentralized multi-objective optimization. Section~\ref{Section:I-EPOS} shows how fairness minimization is incorporated into a state-of-the-art collective learning approach without additional communication cost. Section~\ref{Section:Experiments} illustrates the experimental evaluation of the proposed model and the trade-offs of optimizing the three orthogonal objectives. Finally, conclusions and future work are drawn in Section~\ref{Section:Conclusions}.

\section{Related Work}\label{Section:Related_Work}

The problem introduced in this work assumes an adaptive, decentralized optimization of three objectives that are often orthogonal~\cite{pournaras2018}: system-wide efficiency, individuals' comfort, and the fairness of comfort distribution. In this section, we illustrate multi-agent approaches to multi-objective combinatorial optimization (Subsection \ref{Related Works: Multi-agent Approaches to Combinatorial Optimisation}) and prior approaches to fairness optimization in multi-agent systems (Section \ref{Related Works: Fairness}).

\subsection{Decentralized Multi-agent Combinatorial Optimization}\label{Related Works: Multi-agent Approaches to Combinatorial Optimisation}

Solutions to many well-known combinatorial optimization problems can be obtained in a multi-agent setting. While multi-agent systems allow for autonomy of agents' computations that can run in parallel and on multiple machines, resulting in a shorter runtime and more efficient resource utilization, they also require additional synchronization schemes and message passing protocols that increase their complexity. For example, the multi-agent version of Particle Swarm Optimization with agents of different types~\cite{nouiri2018} is used to solve the flexible job-shop scheduling problem, which is NP-hard. The same problem was solved using the multi-agent version of the genetic algorithm coupled with Tabu search, a memory-based local search scheme that potentially accepts worse solutions if no improvements are available in near proximity~\cite{azzouz2012}. Community detection in complex networks is also NP-hard, and it is solved in a multi-agent setting using a genetic algorithm, assuming a lattice-like agent topology~\cite{li2016}. Some multi-agent approaches allow solving multiple different problems simultaneously, by abstracting scheduling and routing problems~\cite{martin2016} using ontologies and two types of agents: a launcher agent that converts the problem into an ontology and initializes others, and metaheuristic agents that implement metaheuristic solution search specified in the ontology. Finally, the power supply restoration problem in Smart Grids, another variation of the knapsack problem that is also NP-hard, is solved in a multi-agent setting via Lagrangian relaxation~\cite{agrawal2015}.

The majority of these solutions adopt a centralized approach, whereas only a few consider adaptive decentralized approaches. The Combinatorial Optimization Heuristic for Distributed Agents (COHDA)~\cite{hinrichs2013} can solve multi-agent combinatorial problems without assuming a centralized orchestrator, but it accounts only for a single objective. It considers an arbitrary network topology, in which each agent exchanges~\textit{configurations} with all adjacent agents. EPOS can optimize a single objective while leveraging the agents' tree hierarchy~\cite{pournaras2010}. The following work on I-EPOS extends this to multi-objective optimization, while agents rely on a tree hierarchy to efficiently coordinate their decision making~\cite{pournaras2018,Pournaras2020}. Our model can accommodate any decentralized combinatorial optimization algorithm designed for multiple objectives simultaneously. As a proof of concept, we demonstrate the applicability of our model to I-EPOS as an exemplary algorithm with outstanding performance and practical use. Earlier work has also shown the gap on optimizing fairness~\cite{pournaras2014m,Pournaras2019}.

\subsection{Fairness in Multi-Agent Optimization}
\label{Related Works: Fairness}

The problem of fairness in multi-agent systems dates back to a cake-cutting problem -- a common metaphor for the division of an heterogeneous divisible good~\cite{chen_truth_2013, dubins_how_1961}, where the good may be the allocation of resources, such as slots in the power grid schedule or the use of cycling stations. Fairness in such systems is a direct relaxation of the perfect rationality assumption~\cite{jong_fairness_2008}, where the agent can take into account choices that are not perfectly rational in the sake of satisfying some of their pre-defined normative principles. 

From the perspective of the individual agent, fairness can be connected directly to the principle of reciprocity~\cite{fehr_fairness_2000, fehr_fairness_2000-1}, where agents act in a certain way because they assume that their action is reciprocated by other agents. This assumption is often a logical prerequisite for envisaging collective action that exists without direct monetary reward~\cite{kahan_logic_2003}. If the system is not spontaneous (i.e. it is designed to steer agents towards a certain goal), the fairness of the multi-agent system can also be assessed from the system designer's perspective~\cite{hunt_origin_2007}. 

In several systems, such as routing~\cite{jiang_graph_2020}, traffic control~\cite{kacprzyk_traffic_2010}, or cloud computing~\cite{minarolli_virtual_2013}, fairness of resource allocation is an essential element for ensuring that system function is not interrupted, leaving particular agents depleted or without access to a service. Fairness in this use-case scenario is often associated with the concept of \textit{load-balancing}~\cite{bejerano_fairness_nodate, kleinberg_fairness_2001}. 

In recent years, multiple works have studied fairness as either an emergent property or the optimization criterion in multi-agent systems~\cite{zhang_fairness_2014, jiang_learning_2019, kumar_decaf_2025}, with several of them focusing on proportional fairness, envy-freeness~\cite{jiang_learning_2019}, maxim fairness~\cite{zhang_fairness_2014} and proportionality rewards~\cite{kumar_decaf_2025} as some prominent examples. The Nash social welfare notion of fairness has also been studied in the context of multi-agent sequential decision making in centralized~\cite{Hossain2021} and decentralized~\cite{Wang2026} settings, showing that it can limit information aggregation and slow down convergence. In contrast, this work separate efficiency and fairness, while also supports the optimization of agents' comfort. This three-dimensional optimization provides more flexibility for system operators, as well as for agents to adapt to different scenarios and explore appropriate trade-offs.

In multi-agent reinforcement learning problems, the legicographic max-min, generalized Gini social welfare function and proportional fairness have been studied in an architecture with two neural sub-networks that co-optimize efficiency and equity~\cite{Zimmer2021}. The reciprocal of the coefficient of variation has been studied in the transport domain to optimize for equal distances traveled~\cite{Aloor2024}, while inter-agent communication is introduced to optimize for temporal fairness using the expected scalarized return (ESR) rather than scalarized expected return (SER) to ensure fairness within and across each time step~\cite{Chouaki2026}. These approaches are limited to a very low number of agents, in contrast to this work that provides a model and its applicability to scale optimization to thousands of agents.

\section{System Optimization Modeling}\label{Section:Methodology}
This section illustrates the modeling of the problem and the formulation of the agents' behavior. Subsection \ref{Methodology:Modelling} shows the general overview of the model, while Section \ref{Methodology:Formulation} introduces the precise formalization of the model.

\subsection{Scenario Modeling}\label{Methodology:Modelling}

This work assumes multiple agents collectively planning their actions, taking into account the impact of those actions on system-wide efficiency (global cost) and the level of their individual discomfort (local cost) they experience. In such a model, each agent self-determines a set of options (plans) from which it can choose, representing the operational flexibility of the agent. The agent's choice is communicated to other agents in the network. All agents optimize their respective choices, taking into account (i) system-wide efficiency (impact of the plan selection on the system performance) and (ii) the level of their individual discomfort caused by the selection (how much comfort they are sacrificing by this selection). 

This model is relevant for various scenarios in collaborative planning, including Smart Grid scenarios, in which the goal is to schedule household appliances over a day in accordance with citizens' preferences, with the aim of reducing the likelihood of a blackout\footnote{We assume blackouts when energy peaks exceed a certain threshold.}~\cite{Fanitabasi2020}. Each household (agent) has a set of options: \textit{energy profiles} or \textit{schedules}, that represent energy demand throughout the day and correspond to the energy consumption of appliances scheduled for different time points or consumption levels~\cite{Fanitabasi2020}. The schedule (energy profile) is defined as energy consumption over time. The profiles can be ranked by their convenience (i.e. thermal/temperature), a measure that expresses the comfort of using each schedule. The sum of all selected energy demand profiles in the grid should not exceed the blackout threshold, i.e., the total energy consumption in the grid should be stabilized over time. Similarly, the same framework can be used to model a bike-sharing scenario over time, in which an agent represents a citizen. The goal is to reduce the number of manual bike relocations, which are costly and time-consuming, by regulating where citizens pick up and return bikes, while accounting for the total relocation distance. In both cases, the agent selection is guided not only by the discomfort cost but also by the choices of other agents that co-determine system performance.

\subsection{Problem Formulation}\label{Methodology:Formulation}

Table~\ref{tab:notation} introduces the mathematical notations used throughout this paper. Assume a network of $N$ agents embedded in an arbitrary graph structure. Let $a$ denote an agent and $\mathcal{A}$, $(|\mathcal{A}| = N)$, a set of all agents in the network. An agent $a \in \mathcal{A}$ self-determines a set of possible plans $\mathcal{P}_a$ which represents a set of options an agent can choose from. Multiple plans are the flexibility that each agent contributes to the system to optimize. A possible plan $\mathbf{p} \in \mathcal{P}_a$ is a vector of length $d$ $(\mathbf{p} \in \mathbb{R}^d)$ and represents one option, e.g. a schedule for appliances in a Smart Grid scenario\footnote{Since schedules may represent energy consumption over time, different plans represent a demand shift~\cite{Fanitabasi2020}.}, or a trip in a bike-sharing scenario~\cite{Pournaras2020}. All agents have possible plans of the same length $d$. Each possible plan is associated with a \textit{cost}. Let $l(\mathbf{p})$ be a \textit{discomfort cost function} of the plan $\mathbf{p}$, defined as follows: $l:\mathbb{R}^d \xrightarrow{} \mathbb{R}$ and its value when applied to plan $\mathbf{p}$ is referred to as the \textit{cost} of the plan $\mathbf{p}$. The discomfort cost may be expressed as the distance to the preferred biking station. It can represent the price or the required resources to execute a certain plan.

\begin{table}[!htb]
	\centering
	\caption{Table of notation used throughout this work.}
	\resizebox{\columnwidth}{!}{
		\begin{tabular}{ll}
			\toprule
			\textbf{Symbol} &  \textbf{Definition} \\
			\midrule
			$N$ & Number of agents \\
			$T$ & Number of iterations \\
			$\mathcal{A}$ & Set of agents \\
			$\mathcal{P}_a $ & Set of possible plans for agent $a$ \\
			$\mathcal{T}_a$ & Set of children of agent $a$ \\
			$a, r \in \mathcal{A}$ & Agent $a$, root agent $r$\\
			$p_a$, $c \in \mathcal{T}_a$ & Parent and child of agent $a$ \\
			$\mathbf{p} \in \mathcal{P}_a$ & Plan from a set of possible plans for agent $a$ \\
			$s_a^{(t)} \in \mathcal{P}_a$ & Plan selected by agent $a$ at iteration $t \in T$ \\
			$\tilde{s}_a^{(t)} \in \mathcal{P}_a$ & Preliminary plan selected by agent $a$ at iteration $t \in T$ \\
			$\mathbf{g}^{(t)}$ & Global response (sum of all plans) \\
			$g$, $l$ & Inefficiency and discomfort cost function ($\mathbb{R}^d \rightarrow \mathbb{R}$) \\
			$G$, $L$, $U$ & Inefficiency, average discomfort and unfairness cost \\
			$\mathbf{a}_a^{(t)}$ & Aggregated response of agent $a$ at time $t$ \\
			$\tilde{\mathbf{a}}_a^{(t)}$ & Preliminary aggregated response of agent $a$ at time $t$ \\
			$k_a^{(t)}$ & Aggregated plan cost of agent $a$ at time $t$ \\
			$\tilde{k}_a^{(t)}$ & Preliminary aggregated plan cost of agent $a$ at time $t$ \\
			$K^{(t)}_{a}$ & Aggregated squared plan cost of agent $a$ at time $t$ \\
			$\tilde{K}^{(t)}_{a}$ & Preliminary aggregated squared plan cost of agent $a$ at time $t$ \\
			$\gamma, \beta, \alpha \in \mathbb{R}$ & Weights for inefficiency, discomfort and unfairness cost \\
			$i_{a}^{(t)} \in \{0, 1\}^{|\mathcal{T}_a|}$ & Binary acceptance vector of client $a$ at round $t$ \\
			$\hat{i}_{a}^{(t)} \in \{0, 1\}^{|\mathcal{T}_a|}$ & Preliminary binary acceptance vector of client $a$ at round $t$ \\
			\bottomrule
	\end{tabular}}\label{tab:notation}
\end{table}

We further assume that agents reassess their selected plans iteratively. At each iteration $t \in T$, each agent $a$ selects one of the possible plans, referred to as the \textit{selected plan} and noted as $s_a^{(t)} \in \mathcal{P}_a$. The sum of selected plans of all agents in the network is referred to as a \textit{global response}, $\mathbf{g}^{(t)} \in \mathbb{R}^d$:

\begin{equation}
    \mathbf{g}^{(t)} = \sum_{a \in \mathcal{A}}s_a^{(t)}.
\end{equation}

Intuitively, in the energy domain, the global response $\mathbf{g}^{(t)}$ may represent the total power consumption over 24h in the Smart Grid. Alternatively, in the bike sharing scenario, it may represent all the incoming and outgoing bikes in a bike sharing station. The global response is constructed via cooperation of agents with respect to one or more objectives, i.e. depending on objective(s), a global response with different characteristics can be constructed, by self-regulating selections at each agent.

The fitness of the agent selections is evaluated according to one or more objectives. The efficiency objective is abstracted by a cost function $g: \mathbb{R}^d \xrightarrow{} \mathbb{R}$ that takes the global response $\mathbf{g}^{(t)}$ as input. The value of the function $g$ applied to global response at time $t$, $g^{(t)}$, is referred to as the \textit{inefficiency cost}:

\begin{equation}
    G^{(t)} = g(\mathbf{g}^{(t)}) = g(\sum_{a \in \mathcal{A}}s_a^{(t)}).
    \label{eq:global_cost}
\end{equation}

\noindent The discomfort objective is assessed by the \textit{average discomfort cost}, $L^{(t)} \in \mathbb{R}$, defined as the average over individual discomfort costs of the agents' selected plans, or more formally:

\begin{equation}
    L^{(t)} = \frac{1}{N}\sum_{a \in \mathcal{A}}l(s_a^{(t)}).
	\label{eq:local_cost}
\end{equation} 

We assume that agents can communicate with each other to adapt their plan choices and collaboratively optimize the objectives. Communication and coordination are necessary, especially when the inefficiency and discomfort cost functions are nonlinear and minimizing independently these costs functions does not necessarily yield a system-wide minimum~\cite{pournaras2018}.

An agent $a$ summarizes the knowledge from all other agents with which it exchanges messages (denoted as $\mathcal{T}_a \subset \mathcal{A}$), including itself, via \textit{aggregation}. Given a possible plan and its discomfort cost, the aggregated knowledge is constructed by computing the following: (i) \textit{aggregated response}, the sum of all received plans and (ii) \textit{aggregated plan cost}, used by the discomfort cost function. The aggregated response of an agent $a$ at time $t, a_{a}^t \in \mathbb{R}^d$, is a sum of selected plans of agent $a$ and all the agents that is connected with:

\begin{equation}	
    \mathbf{a}_a^{(t)} = \sum_{c \in \mathcal{T}_a}s_c^{(t)},
	\label{eq:aggregated-response-definition}
\end{equation}

\noindent where $c \in \mathcal{T}_a$ is an agent linked with agent $a$. The aggregated plan cost at time $t$ is a sum of discomfort costs of the selected plans of agent $a$ and all the agents that is connected with:

\begin{equation}
    k_a^{(t)} = \sum_{c \in \mathcal{T}_a}l(s_c^{(t)}).
	\label{eq:agg-plan-cost}
\end{equation}

\section{Modeling Fairness Optimization}\label{Section:Decentralised_Optimisation}

This section introduces the optimization model of fairness and how it can complement a multi-objective optimization of efficiency and comfort. 

\subsection{Unfairness Minimization}\label{Framework:Unfairness_Minimisation}

The model illustrated above takes into consideration the inefficiency cost (Equation \ref{eq:global_cost}) and average discomfort cost (Equation \ref{eq:local_cost}). However, it does not account for the distribution of discomfort costs across the agents. It is possible that although the inefficiency and average discomfort costs are minimized, the distribution is heavily skewed, causing to a selected subset of agents much higher discomfort levels than the rest of the set. Formally, the unfairness cost, $U^{(t)} \in \mathbb{R}$ is defined as the standard deviation between the costs of the selected plans, or more formally:

\begin{equation}
    U^{(t)}  = \sqrt{\frac{1}{N}\sum_{c \in \mathcal{A}}(l(s_c^{(t)}) - L^{(t)})^2}.
	\label{eq:unfairness-definition}
\end{equation}

\noindent The dispersion and equality of discomfort costs can also be calculated with other metrics~\cite{Xinying2023}. For instance, the coefficient of variation is scale invariant, while the Jain's fairness index and the Gini coefficient are both scale invariant and bounded. We show here how the Equation 6 can be calculated and optimized in decentralized settings and the principle can be extended to other measures as well (see Appendix~\ref{Appendix:Fairness}). 

Minimizing unfairness (Equation~\ref{eq:unfairness-definition}) requires knowledge of the discomfort costs of all agents in $\mathcal{A}$, which is not available a priori to each individual agent in a decentralized network. However, we show that Equation~\ref{eq:unfairness-definition} can be rewritten using only $k_a^{(t)}$ and $K_a^{(t)}$ (aggregated and aggregated squared plan costs). Since each agent $a$ computes those only with respect to other agents with which it is linked as $\mathcal{T}_a \subseteq \mathcal{A}$, this yields an approximation of system-wide unfairness. Since agents can adjust their plan selections to minimize costs, this approximation can also iteratively improve. Moreover, since the aggregated plan cost $k_a^{(t)}$ is locally available to agents to calculate the average discomfort cost, no additional communication overhead is required for the fairness calculation. We introduce though a low-cost local computation of the \textit{aggregated squared plan cost} at time $t$, $K_{a}^{(t)} \in \mathbb{R}$, that represents the sum of \textit{squared} discomfort costs of the selected plans of agent $a$ and all the agents that it is connected with:

\begin{equation}
    K^{(t)}_{a} = \sum_{c \in \mathcal{T}_a}(l(s_c^{(t)}))^2.
	\label{eq:agg-sq-plan-cost}
\end{equation}

We show below that it is possible to rewrite Equation~\ref{eq:unfairness-definition} using only the quantities of aggregated plan cost and aggregated squared plan cost, which are in direct disposition of each agent. Moreover, once rewritten in that form, the Equation~\ref{eq:unfairness-definition} is a square root of the (biased) sample variance, and it converges to the true sample variance as the sample size increases. To this end, we introduce the following lemma and its corollary:

\begin{lemma}

The aggregated plan cost $k_a^{(t)}$, aggregated squared plan cost $K_a^{(t)}$ and the number of agents $N$ are sufficient for agents connected in a tree network to compute unfairness as: 

\[
U^{(t)} = \sqrt{\frac{1}{N}K_a^{(t)} - \left(\frac{1}{N}k_a^{(t)}\right)^2},
\]

\noindent where $k_a^{(t)}$ and $K_a^{(t)}$ are defined in Equations~\ref{eq:agg-plan-cost} and~\ref{eq:agg-sq-plan-cost} and evaluated over $\mathcal{T}_a = \mathcal{A}$ for $N=|\mathcal{A}|$.
\end{lemma}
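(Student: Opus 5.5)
The identity is the variance shortcut formula (second moment minus squared mean), applied to the quantities the root of the tree can compute. I will write $x_c = l(s_c^{(t)})$ for $c \in \mathcal{A}$. The argument then reduces to expanding the square in Equation~\ref{eq:unfairness-definition} and recognizing the resulting sums as $K_a^{(t)}$ and $k_a^{(t)}$ evaluated over $\mathcal{T}_a = \mathcal{A}$.

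\emph{Step 1: rewrite the mean.} By Equation~\ref{eq:local_cost}, $L^{(t)} = \frac{1}{N}\sum_{c\in\mathcal{A}} x_c$. When $\mathcal{T}_a = \mathcal{A}$, Equation~\ref{eq:agg-plan-cost} gives $k_a^{(t)} = \sum_{c\in\mathcal{A}} x_c$. Hence $L^{(t)} = \frac{1}{N}k_a^{(t)}$. In a tree this is exactly the situation at the root $r$: the aggregation of Equations~\ref{eq:aggregated-response-definition}--\ref{eq:agg-sq-plan-cost}, carried out recursively over subtrees, sums over every agent exactly once. The reason is that each node forwards its subtree sum to a unique parent and subtrees are disjoint, so no agent is counted twice or omitted.

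\emph{Step 2: expand the square.} Under the root in Equation~\ref{eq:unfairness-definition},
\[
\frac{1}{N}\sum_{c}(x_c - L^{(t)})^2 = \frac{1}{N}\sum_c x_c^2 - 2L^{(t)}\cdot\frac{1}{N}\sum_c x_c + (L^{(t)})^2 .
\]
Since $\frac{1}{N}\sum_c x_c = L^{(t)}$, the last two terms combine to $-(L^{(t)})^2$. By Equation~\ref{eq:agg-sq-plan-cost} with $\mathcal{T}_a=\mathcal{A}$, $\sum_c x_c^2 = K_a^{(t)}$. Substituting the result of Step 1 gives
\[
\frac{1}{N}K_a^{(t)} - \left(\frac{1}{N}k_a^{(t)}\right)^2 .
\]

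\emph{Step 3: take square roots.} This quantity is a mean of squares, hence nonnegative, so its square root is well defined. Taking the square root yields the claimed formula. This also establishes sufficiency: the right-hand side depends only on $k_a^{(t)}$, $K_a^{(t)}$ and $N$.

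The algebra itself is routine. The only point needing care is Step 1's justification that, in a tree, the recursively aggregated $k$ and $K$ at the root equal the full sums over $\mathcal{A}$. I would handle this by a short induction on subtree height. At a leaf, the aggregate is the agent's own cost and its square. At an internal node, the aggregate is the agent's own value plus the children's subtree aggregates, and the children's subtrees partition the node's descendants.
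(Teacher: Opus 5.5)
Your proposal is correct and matches the paper's proof. Both expand the square in the definition of $U^{(t)}$, identify $\frac{1}{N}\sum_c l(s_c^{(t)})^2$ and $L^{(t)}$ with $\frac{1}{N}K_a^{(t)}$ and $\frac{1}{N}k_a^{(t)}$ over $\mathcal{T}_a=\mathcal{A}$, collapse the middle and last terms, and take the nonnegative square root. Your induction showing that the root's recursive aggregates count every agent exactly once is extra: the paper simply takes $\mathcal{T}_a=\mathcal{A}$ as given, and the identity itself does not need it.
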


\begin{proof}
Expanding on the square of Equation~\ref{eq:unfairness-definition}, we get: 
\begin{align*}
(U^{(t)})^2
&= \frac{1}{N}\sum_{c \in \mathcal{A}}\bigl(l(s_c^{(t)}) - L^{(t)}\bigr)^2 \\
&= \frac{1}{N}\sum_{c \in \mathcal{A}} l(s_c^{(t)})^2
   - 2L^{(t)}\cdot\frac{1}{N}\sum_{c \in \mathcal{A}} l(s_c^{(t)})
   + \bigl(L^{(t)}\bigr)^2 \\
&= \frac{1}{N}K_a^{(t)}
   - 2L^{(t)}\cdot\frac{1}{N}\sum_{c \in \mathcal{A}} l(s_c^{(t)})
   + \bigl(L^{(t)}\bigr)^2 \\
&= \frac{1}{N}K_a^{(t)} - 2L^{(t)}\cdot\frac{1}{N}k_a^{(t)} + \bigl(L^{(t)}\bigr)^2 \\
&= \frac{1}{N}K_a^{(t)} - 2\cdot\frac{1}{N}k_a^{(t)}\cdot\frac{1}{N}k_a^{(t)} + \left(\frac{1}{N}k_a^{(t)}\right)^2 \\
&= \frac{1}{N}K_a^{(t)} - 2\left(\frac{1}{N}k_a^{(t)}\right)^2 + \left(\frac{1}{N}k_a^{(t)}\right)^2 \\
&= \frac{1}{N}K_a^{(t)} - \left(\frac{1}{N}k_a^{(t)}\right)^2.
\end{align*}
Since $U^{(t)} \geq 0$, taking the square root of both sides gives
\[
U^{(t)} = \sqrt{\frac{1}{N}K_a^{(t)} - \left(\frac{1}{N}k_a^{(t)}\right)^2}. \qedhere
\]
\end{proof}

\begin{corollary}
Let $X$ be a real-valued random variable with $\mathbb{E}[X^2] < \infty$, and let the plan costs $X_c = l(s_c^{(t)})$, $c \in \mathcal{A}$, $|\mathcal{A}| = N$, be independent and identically distributed samples from the distribution of $X$. Then $(U^{(t)})^2$ is the (biased) sample variance of $X_1,\dots,X_N$, and as $N \to \infty$, it almost surely holds that: 
\[
(U^{(t)})^2 \xrightarrow{a.s.} \mathrm{Var}(X), \qquad
U^{(t)} \xrightarrow{a.s.} \sqrt{\mathrm{Var}(X)}.
\]
\end{corollary}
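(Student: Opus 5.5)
The plan is to combine the algebraic identity from the preceding Lemma with the strong law of large numbers, applied twice, and then pass to the limit using continuity. Write $X_c = l(s_c^{(t)})$ for $c \in \mathcal{A}$. Set
\[
\bar{X}_N = \frac{1}{N}\sum_{c} X_c = \frac{1}{N}k_a^{(t)}
\quad\text{and}\quad
\overline{X^2}_N = \frac{1}{N}\sum_c X_c^2 = \frac{1}{N}K_a^{(t)}.
\]
These are evaluated with $\mathcal{T}_a = \mathcal{A}$, as in the Lemma.

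The first claim needs no probability. By Equation~\ref{eq:unfairness-definition}, and since $L^{(t)} = \bar{X}_N$ by Equation~\ref{eq:local_cost}, we have $(U^{(t)})^2 = \frac{1}{N}\sum_c (X_c - \bar{X}_N)^2$. This is exactly the biased (divide-by-$N$) sample variance of $X_1,\dots,X_N$. The Lemma rewrites it as $\overline{X^2}_N - \bar{X}_N^2$, and we will use that form for the limit.

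Next, apply Kolmogorov's strong law of large numbers to each of the two averages. The assumption $\mathbb{E}[X^2]<\infty$ gives $\mathbb{E}|X^2| < \infty$ directly. It also gives $\mathbb{E}|X| \le \sqrt{\mathbb{E}[X^2]} < \infty$ by Cauchy--Schwarz (or Jensen). Since the $X_c$ are i.i.d., the variables $X_c^2$ are i.i.d. as well, so the strong law applies to both sequences. It yields $\bar{X}_N \to \mathbb{E}[X]$ a.s. and $\overline{X^2}_N \to \mathbb{E}[X^2]$ a.s. Each limit holds outside a null set, and a union of two null sets is null, so both hold simultaneously almost surely.

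The map $(u,v)\mapsto v - u^2$ is continuous, so almost surely
\[
(U^{(t)})^2 = \overline{X^2}_N - \bar{X}_N^2 \to \mathbb{E}[X^2] - (\mathbb{E}[X])^2 = \mathrm{Var}(X).
\]
For the second limit, note that $(U^{(t)})^2 \ge 0$ for every $N$. The square root is continuous on $[0,\infty)$, and $\mathrm{Var}(X)\ge 0$, so $U^{(t)} = \sqrt{(U^{(t)})^2} \to \sqrt{\mathrm{Var}(X)}$ a.s.

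The only points needing care are minor. One is that the moment condition supplies integrability for both the first and second moments. The other is interpreting the statement as a sequence indexed by $N$: we take one infinite i.i.d. sequence and look at its first $N$ terms, so that almost-sure convergence is meaningful.
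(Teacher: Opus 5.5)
Your proposal is correct and follows essentially the same route as the paper: you rewrite $(U^{(t)})^2$ via the Lemma as $\frac{1}{N}K_a^{(t)} - \left(\frac{1}{N}k_a^{(t)}\right)^2$, apply the strong law of large numbers to the first and second empirical moments, and pass to the limit by continuity before taking square roots. Your extra care fills in details the paper leaves implicit: first-moment integrability via Cauchy--Schwarz, intersecting the two almost-sure events, and embedding the samples in a single infinite i.i.d. sequence.
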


\begin{proof}
By Lemma~1, $(U^{(t)})^2 = \frac{1}{N}\sum_{c \in \mathcal{T}_a} X_c^2 - \left(\frac{1}{N}\sum_{c \in \mathcal{T}_a} X_c\right)^2$, which is the (biased) sample variance of $X_1,\dots,X_N$. Since $\mathbb{E}[X^2] < \infty$, the strong law of large numbers gives:
\[
\frac{1}{N}\sum_{c \in \mathcal{T}_a} X_c \xrightarrow{a.s.} \mathbb{E}[X],
\quad
\frac{1}{N}\sum_{c \in \mathcal{T}_a} X_c^2 \xrightarrow{a.s.} \mathbb{E}[X^2].
\]
Therefore:
\begin{align*}
(U^{(t)})^2
&\xrightarrow{a.s.} \mathbb{E}[X^2] - (\mathbb{E}[X])^2 \\
&= \mathbb{E}\big[(X - \mathbb{E}[X])^2\big] \\
&= \mathrm{Var}(X).
\end{align*}
Taking square roots yields: $U^{(t)} \xrightarrow{a.s.} \sqrt{\mathrm{Var}(X)}$.
\end{proof}

Given the above results, it becomes clear that the variance of the incurred discomfort costs (i) can be directly computed (approximated) from the information already in possession of each agent, without any additional messages, and (ii) converges to the true variance as $N$ increases. Furthermore, the same approach using only the aggregated plan cost and aggregated squared plan cost can be used to derive more complex fairness measures. In Appendix \ref{Appendix:Fairness} we illustrate one additional measure of fairness, namely the Jain's fairness index~\cite{sediq_optimal_2012}. It should be noted that Corollary~1 relies on the assumption of independent and identically distributed samples, which may not be strictly satisfied in the decentralized setting where agents make coupled decisions to minimize a shared objective. In practice though, the derived formula serves as a tractable proxy for system-wide unfairness, enabling its estimation without any additional communication overhead.

\subsection{Multi-objective Optimization}
\label{Framework:Collaborative_Optimisation}

The model combines three (orthogonal) objectives: minimizing inefficiency cost (Equation \ref{eq:global_cost}), discomfort cost (Equation \ref{eq:local_cost}) and unfairness cost (Equation \ref{eq:unfairness-definition}). The objectives are scalarized, i.e. costs are linearly combined into a single additive complex cost. The optimization problem to solve is then formulated $\forall a \in \mathcal{A}$ as follows:

 \begin{equation}
  \label{eq:objective}
  \begin{aligned}
  s_a^{(t)}=\argminimal{\substack{\mathbf{p}_a \in \mathcal{P}_a }} \quad
      & \gamma G^{(t)} + \alpha U^{(t)} + \beta L^{(t)}, \\
      G^{(t)}
      & = g\left(\sum_{a \in \mathcal{A}} \mathbf{p}_a \right), \\
      L^{(t)}
      & = \frac{1}{N}\sum_{a \in \mathcal{A}} l(\mathbf{p}_a), \\
      U^{(t)}
      & = \sqrt{\frac{1}{N}\sum_{a \in \mathcal{A}}\bigl(l(\mathbf{p}_a) - L^{(t)}\bigr)^2},
  \end{aligned}
  \end{equation}
  
\noindent where $s_a^{(t)} \in \mathcal{P}_a$ is the plan selected by agent $a \in \mathcal{A}$, and $\alpha, \beta, \gamma \in \mathbb{R}$ are weights such that $0 \leq \alpha, \beta, \gamma \leq 1$ and $\alpha + \beta + \gamma = 1$. The weights indicate preference towards a certain goal. They can be uniformly selected for the whole system or each agent can set its own preference towards certain goals. In this case, $\alpha_a, \beta_a, \gamma_a$ are a triplet of weights that expresses the preference of agent $a$ towards certain optimization goals. Agents are assumed to have intrinsic or extrinsic incentives (e.g. monetary rewards) to be altruistic, i.e. high $\gamma$, $\alpha$ values and low $\beta$ value. For instance, this may include environmental friend energy consumers prioritizing low thermal comfort and high responsiveness to the availability of renewable energy resources, or responsiveness to low energy prices.

Since the objectives are scalarized and plans are discrete, the problem becomes a standard multi-objective combinatorial problem, where we search for a plan $s_a \in \mathcal{P}_a$ for each agent $a \in \mathcal{A}$ that minimizes Equation~\ref{eq:objective}. The formulation is solver-agnostic and can be integrated to any combinatorial optimization method. To assess the model applicability, we apply the model to I-EPOS~\cite{pournaras2018,Pournaras2020}, which supports decentralized combinatorial optimization in a collaborative manner.

\section{Model Applicability to Collective Learning}\label{Section:I-EPOS}

The I-EPOS~\cite{pournaras2018} algorithm has been supporting decentralized combinatorial optimization of two or more objectives. As such we select it as a basis of assessing the applicability of the proposed model, extending it to optimize for fairness as illustrated in Section~\ref{Framework:Unfairness_Minimisation}. Another alternative algorithm for applicability is COHDA~\cite{hinrichs2013}. We focus here on I-EPOS, mostly because in COHDA, agents exchange their full knowledge, unlike I-EPOS agents that iteratively adjust their plan selections based on the aggregated selection of others members in the network. One advantage of COHDA is that it accounts for arbitrary graph structures, whereas I-EPOS relies on agents that (self-)organize into hierarchical tree topologies for efficient information aggregation and incremental decision making~\cite{pournaras2010}. 

Organizing agents in a tree topology can be achieved using multiple published algorithms, including AETOS~\cite{pournaras2010}, and thus constitutes an acceptable relaxation of the assumptions presented in Section~\ref{Section:Methodology}. Given the tree topology, one of the agents $a \in \mathcal{A}$ occupies the position of a root agent $r$, while the rest $|\mathcal{A}| -1$ agents are placed on a lower level of the tree\footnote{In the subsequent parts of this article, we assume a balanced tree topology with the same branching factor for all nodes of the tree.}. The agents (i) send messages to a set of child agents and (ii) receive messages from the parent agent (denoted as $p_a$). The I-EPOS algorithm iteratively optimizes the plan selections, firstly by propagating preliminary plans from the bottom of the tree up to the root agent (\textit{the bottom-up phase}) and then by propagating knowledge about accepted (or rejected) plans from the root agent down to the bottom of the tree (\textit{the top-down phase}). Both of those phases are outlined in Section~\ref{I-EPOS:Bottom-up} and~\ref{I-EPOS:Top-Dowm}.

\subsection{Bottom-up Phase}
\label{I-EPOS:Bottom-up}

In the bottom-up phase, agents make \textit{preliminary} plan choices that may be accepted based on the received feedback from the agents placed above them in the tree, or may revert to the earlier selection. Formally, every iteration $t \in T$ starts from the agent receiving a preliminary aggregated response from its children, i.e. $\tilde{a}^{(t)}_a = \sum_{c \in \mathcal{T}_a}\tilde{s}^{(t)}_c$. Those aggregated plans are used to calculate all preliminary values of the cost functions, including the preliminary aggregated plan cost $\tilde{k}^{(t)}_a = \sum_{c \in \mathcal{T}_a }l(\tilde{s}^{(t)}_c)$ and preliminary aggregated squared plan cost $\tilde{K}_a^{(t)} = \sum_{c \in \mathcal{T}_a} (l(\tilde{s}^{(t)}_c))^2$. Since the plan is preliminary, it is communicated to the agent's parent, but the agent may revert to previously selected plans if this reduces costs, jointly with other agent selections. The process begins from the bottom of the tree, hence referred to as the \textit{bottom-up phase}.

Formally, since the topology of the tree remains stable during the optimization, each agent is able to track the progress by keeping the previous\footnote{For the first iteration $t=0$, no previous plan information is considered.} values of its own selected plan $s_a^{(t-1)}$ and aggregated responses of all its children $a_c^{(t-1)}$. Let $\tilde{\boldsymbol{i}}_a^{(t)}$ denote the binary acceptance vector of agent $a$ at iteration $i$ and let $\tilde{\boldsymbol{i}}_{a_c}^{(t)}$ denote the element of this vector that corresponds to a child $c \in \mathcal{T}_a$. During the preliminary bottom-up phase, each agent performs a limited-size brute-force search over a discrete space $I = \{0, 1\}^{|\mathcal{T}_a|}$, selecting such $\tilde{\boldsymbol{i}}_a^{*(t)}$ that minimizes\footnote{At the first iteration $t=0$, all proposed changes are accepted.} the scalarized objective as shown below: 

  \begin{equation}
  \begin{aligned}
  \boldsymbol{i}^{*}
      &= \operatorname*{arg\,min}_{\boldsymbol{i} \in I} \quad
         \gamma_a \cdot g(\boldsymbol{g}^{t-1} + \Phi_{a_c}) \\
      &\quad + \alpha_a \cdot
         \sqrt{\frac{1}{N}(K_r^{(t-1)} + \Lambda_{a_c})
         - \frac{1}{N^2}(k_r^{(t-1)} + \Gamma_{a_c})^2} \\
      &\quad + \beta_a \cdot \frac{1}{N}(k_r^{(t-1)} + \Gamma_{a_c}), \\
  \Phi_{a_c}
      &= \sum_{c \in \mathcal{T}_a} \boldsymbol{i}_{a_c} \cdot \Delta \tilde{a}_{c}^{(t)}, \\
  \Gamma_{a_c}
      &= \sum_{c \in \mathcal{T}_a} \boldsymbol{i}_{a_c} \cdot \Delta \tilde{k}^{(t)}_{c}, \\
  \Lambda_{a_c}
      &= \sum_{c \in \mathcal{T}_a} \boldsymbol{i}_{a_c} \cdot \Delta \tilde{K}^{(t)}_{c},
  \end{aligned}
  \end{equation}

\noindent where $\Delta \tilde{\mathbf{a}}_{c}^{(t)} = \tilde{\mathbf{a}}^{(t)}_{c} - \mathbf{a}_{c}^{(t-1)}$ and respectively for $\Delta \tilde{k}_{c}^{(t)}$ and $\Delta \tilde{K}^{(t)}_{c}$. Moreover $\alpha_a, \beta_a, \gamma_a$ are the importance weights corresponding to agent $a$ such that $0 \leq \gamma_a, \alpha_a, \beta_a \leq 1$ with $\gamma_a + \alpha_a + \beta_a = 1$. Each agent accepts the proposed (aggregated) change of its children only when it is improving the scalarized objective. Similarly, the proposed plan $\tilde{s}^{(t)}_a \in P_a$ is selected to minimize the scalarized cost function together with an acceptance vector $\tilde{\boldsymbol{i}_a^{(t)}}$. Finally, the aggregated preliminary response $\tilde{a}^{(t)}_a = \tilde{s}_a^{(t)} + \sum_{c \in \mathcal{T}_a}a_c^{(t-1)} + \sum_{c \in \mathcal{T}_a} \tilde{\boldsymbol{i}}^{(t)}_{a_c} * \Delta \tilde{a}^{(t)}_c$ is sent upwards to the parent. The process iteratively continues until the aggregated responses reach the root agent.

\subsection{Top-Down Phase}
\label{I-EPOS:Top-Dowm}

In the top-down phase, agents determine whether the \textit{preliminary} plan should be accepted and treated as \textit{effective}, or they should revert to the last accepted plan. The decision is based on information received from the parent (acceptance of the parent) and the agent's own decision. The process starts from the root agent $r$ and continues to the bottom of the tree. This phase accounts for the information gap of agents during the bottom-up phase: agents coordinate and optimize plan selections based on information they have for the agents below. However, they do not have information about the agents above. Therefore, the top-down phase is another opportunity of the agents to improve their plan selections by keeping their preliminary selection or switching back to the earlier one. 

A top-down phase begins when preliminary aggregates $\tilde{a}^{(t)}_a$ reach the root agent $r$. The root agent automatically accepts its preliminary acceptance vector and plan, i.e. $\tilde{\boldsymbol{i}}_r^{(t)} = \boldsymbol{i}^{(t)}_r \wedge \tilde{s}_r^{(t)} = s^{(t)}_r$. The effective acceptance vector $\boldsymbol{i}_c^{(t)}$ (together with effective global response, effective aggregated plan cost and effective aggregated squared plan cost) is communicated to children of the root agent $c \in \mathcal{T}_r$. Each agent $a$ computes their effective acceptance vector by taking a Hadamard product of their own preliminary vector and an effective vector communicated by its parent $p_a$, i.e. $\boldsymbol{i}_a^{(t)} = \tilde{\boldsymbol{i}}^{(t)}_a \odot \boldsymbol{i}_{p_a}^{(t)}$. In other words, each agent accepts changes proposed by the child only if: (i) it has accepted those changes (preliminary) during the optimization stage and (ii) those changes are also accepted by its parent. In this way, the knowledge propagates from the root agent $r$ to the bottom of the tree. Both phases of the extended I-EPOS protocol are presented in Figure \ref{fig:I-EPOS}.

\begin{figure*}
    \centering
    \includegraphics[width=0.8\linewidth]{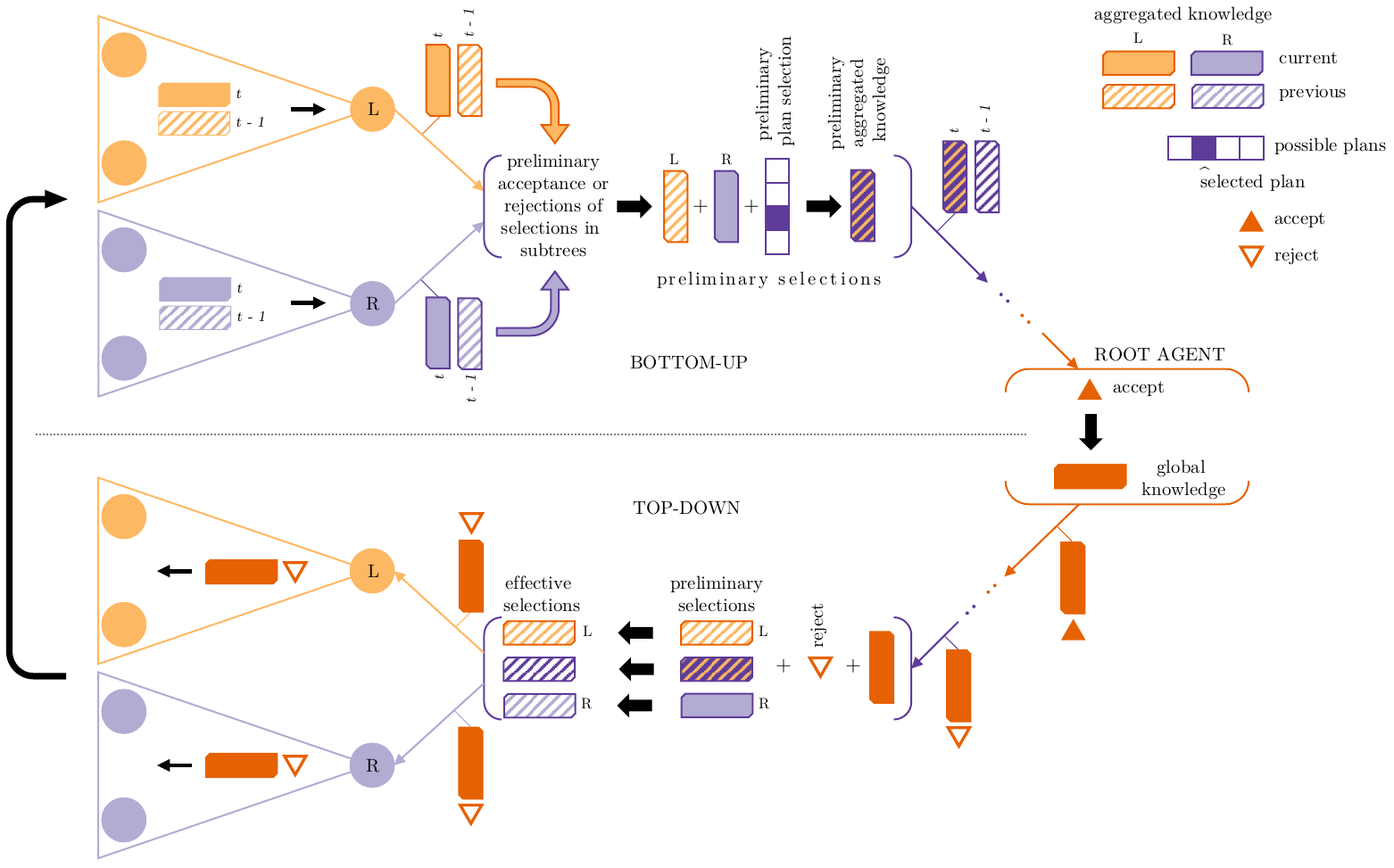}
    \caption{Sequence diagram of the augmented I-EPOS protocol on a balanced tree topology during an iteration. In the bottom-up phase, each agent sends its preliminary plan selection together with the aggregated plan cost $\tilde{k}_a^{(t)}$ and aggregated squared plan cost $\tilde{K}_a^{(t)}$ to its parent to locally optimize the scalarized cost function of the three objectives. In the top-down phase, each agent sends the effective acceptance vectors $\boldsymbol{i}^{(t)}$ alongside the updated global response, aggregated plan cost, and aggregated squared plan cost, allowing each agent to either confirm its preliminary selection or revert to the previous accepted plan.}
    \label{fig:I-EPOS}
\end{figure*}

\section{Experimental Evaluation}\label{Section:Experiments}

The experiments are designed to evaluate three key aspects of the proposed solution: (i) the convergence of the plan selection using three possibly orthogonal objectives (inefficiency and discomfort costs combined with unfairness minimization), (ii) impact of the orthogonal objectives on the frontier of possible solutions found by the agents and (iii) improvement of fairness in discomfort distribution. Section~\ref{Appendx:Datasets} of Appendix outlines the datasets used for conducting the experiments, details of experimental settings and threats to validity. The design of the experiments (including hyperparameter tuning) is illustrated in Section~\ref{Experiments:Design}. Results (together with discussions on them) are shown in Section~\ref{Experiments:Results}.

\subsection{Datasets}\label{Experiments:Datasets}

This work employs three different datasets. The synthetic dataset serves as a baseline for assessing the feasibility of the proposed solution~\cite{pournaras2018}. It consists of $1000$ agent, each having 16 possible plans. The dataset generation is outlined in Appendix~\ref{Appendix:Synthetic}. To further assess the optimality of the solutions found, two additional real-world datasets are used. The Smart Grid Scenario emulates the zonal power transmission in the Pacific Northwest project~\cite{pournaras2017}. It includes $5600$ agents, each with exactly $10$ possible plans. The precise specification of the dataset is found in Appendix~\ref{Appendix:Energy}. The bike-sharing use case relies on data obtained from the Hubway Data Visualization Challenge~\cite{pournaras2018} as shown in Appendix~\ref{Appendix:Bicycle}.

\subsection{Design}\label{Experiments:Design}

The experiments leverage the fact that the scalarized objective gives flexibility to fine-tune the optimization preferences via the weights \globalcostweight, \unfairnessweight and \localcostweight. Recall that a higher weight value indicates a stronger preference towards minimizing it. Formally, each solution $z$ is defined as a 3-dimensional vector, $z = (z_1, z_2, z_3) \in \mathbb{R}^3 $, whose elements correspond to the values of the inefficiency, discomfort and unfairness cost, respectively, as measured at convergence (last iteration of the algorithm), $z = (G^{(T)}, L^{(T)}, U^{(T)})$. The set of all possible solutions $z$ is referred to as the \textit{feasible region} and denoted as $\mathcal{Z}$. Given the possible trade-off between the costs of inefficiency, discomfort and unfairness, we can assess the Pareto frontier of the returned solution set $\mathcal{Z}$. The solutions from the Pareto frontier are particularly relevant since they are the ones that express the trade-off, i.e., they exactly show how improving the value of one objective deteriorates another objective and to what extent. The lowest possible cost values of each objective are in the Pareto frontier.

For the experimental setting, agents of I-EPOS organize into a balanced binary tree. Moreover, we set the efficiency objective as a minimization of load imbalances defined as $g(\mathbf{g^{(t)}}) = \frac{1}{N}\sum_{j=1}^d(g^{(t)}_j - \mu)^2$ with $\mu = \frac{1}{n}\sum_{j=1}^dg_j^{(t)}$.

Since the second objective of the experiments is to explore the trade-offs between three potentially conflicting objectives, the design should also allow for an approximation of the feasible region. The feasible region $\mathcal{Z}$ is approximated by a grid-search over the parameters in the following way: first, the values of parameters $\alpha$ (controlling the unfairness) and $\beta$ (controlling the discomfort cost) are both varied between 0 and 1 with step 0.01, creating 5151 possible combinations of parameters for which $\gamma + \alpha + \beta = 1$ ($\gamma$ is always computed when $\alpha$ and $\beta$ are already determined). One combination of parameters is referred to as a \textit{parameter triplet}, $(\gamma, \alpha, \beta)$. Note that during one I-EPOS run, all agents are initialized with the same parameter triplet, which does not change during runtime. Then, I-EPOS runs with each triplet and fed by the possible plans from three different datasets. The full values of all parameters are shown in Appendix~\ref{Appendix:Experiments}.

\subsection{Results}\label{Experiments:Results}

The convergence of all three objectives in relation to executed iterations is displayed in Figure~\ref{moo:convergence}. Only the $\gamma$ and $\alpha$ values are activated with values of 0.5, i.e. the inefficiency cost, while trying to minimize the unequal distribution of the discomfort costs (unfairness). As presented in the Figure~\ref{moo:convergence}, all three objectives are optimized in the span of no more than 35 iterations, with some datasets (synthetic and energy dataset) reaching the plateau of the efficiency cost function in around 10 to 15 iterations. The unfairness cost function exhibits different behavior as it depends on the distribution of the discomfort costs. It remains stable for the synthetic and bicycle datasets, while it drops sharply for the energy dataset. This behavior can be linked with how unequal the distribution of the discomfort costs is. In the case of the energy dataset, the minimization of unfairness is prominent due to the unequal distribution of the discomfort costs, while for the bicycle dataset, that distribution is uniform from the beginning, hence not leaving much room for further improvement.

\begin{figure*}[!htb]
	\centering	
	\subfloat[Synthetic]{%
		\includegraphics[width=0.305\linewidth]{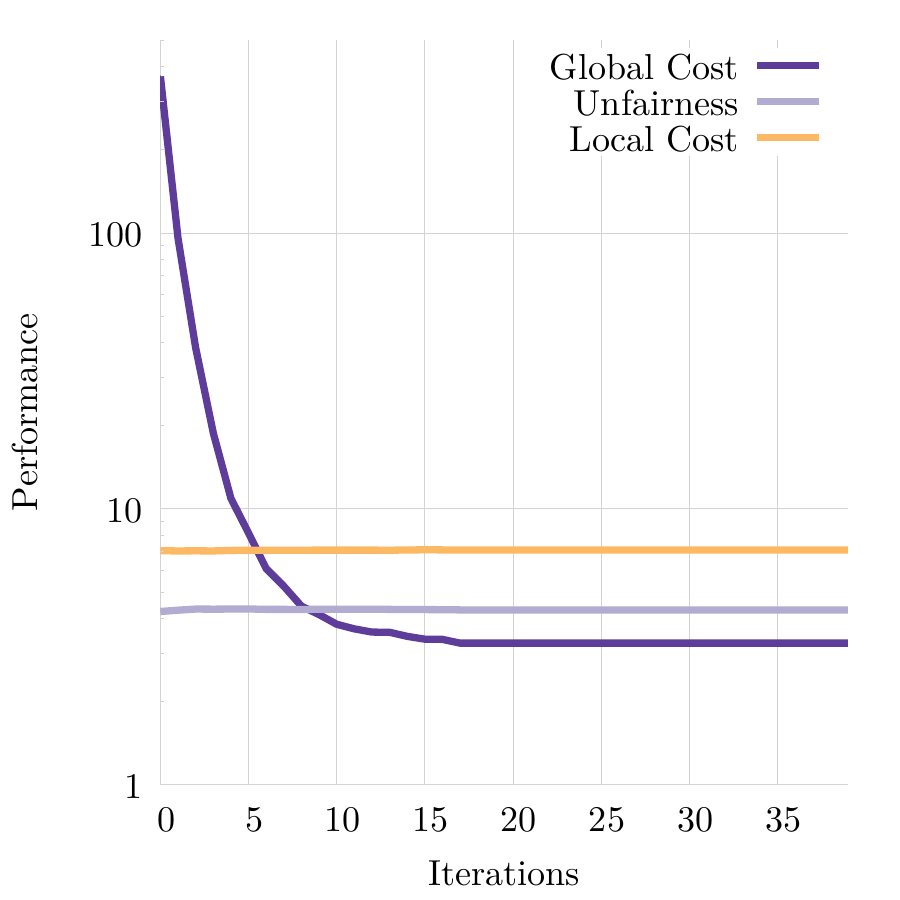}
		\label{moo:convergence-gaussian}}\hfill
	\subfloat[Bicycle]{%
		\includegraphics[width=0.305\linewidth]{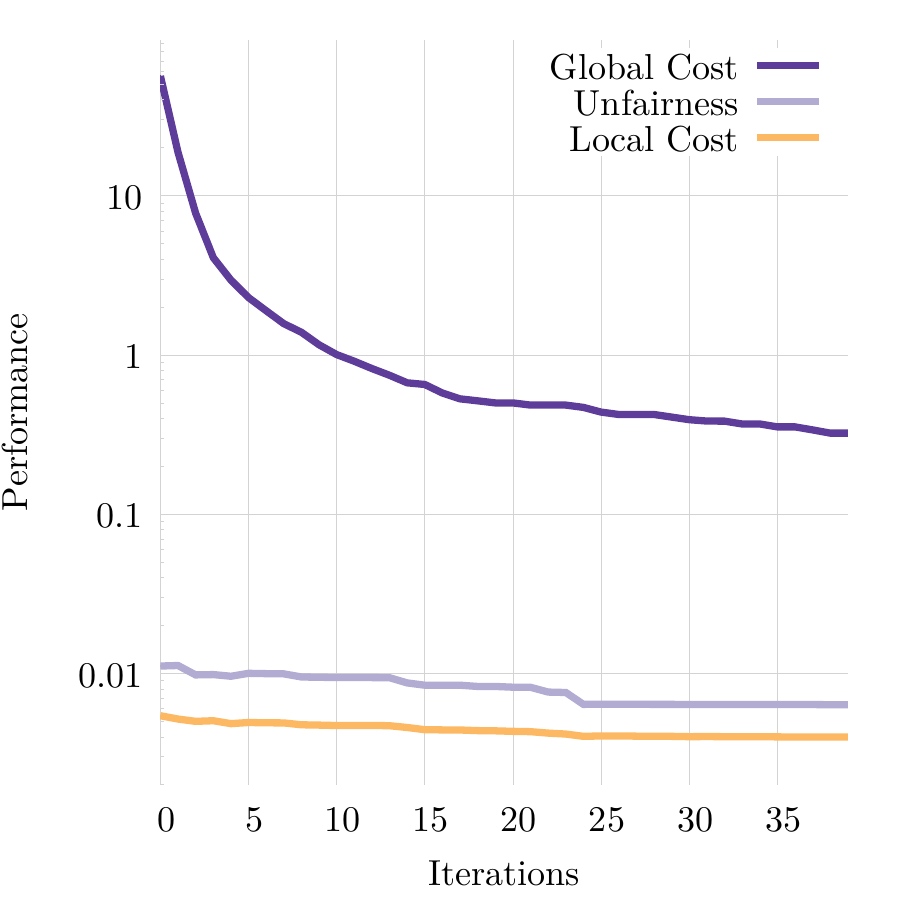}
		\label{moo:convergence-bicycle}}\hfill
	\subfloat[Energy]{%
		\includegraphics[width=0.305\linewidth]{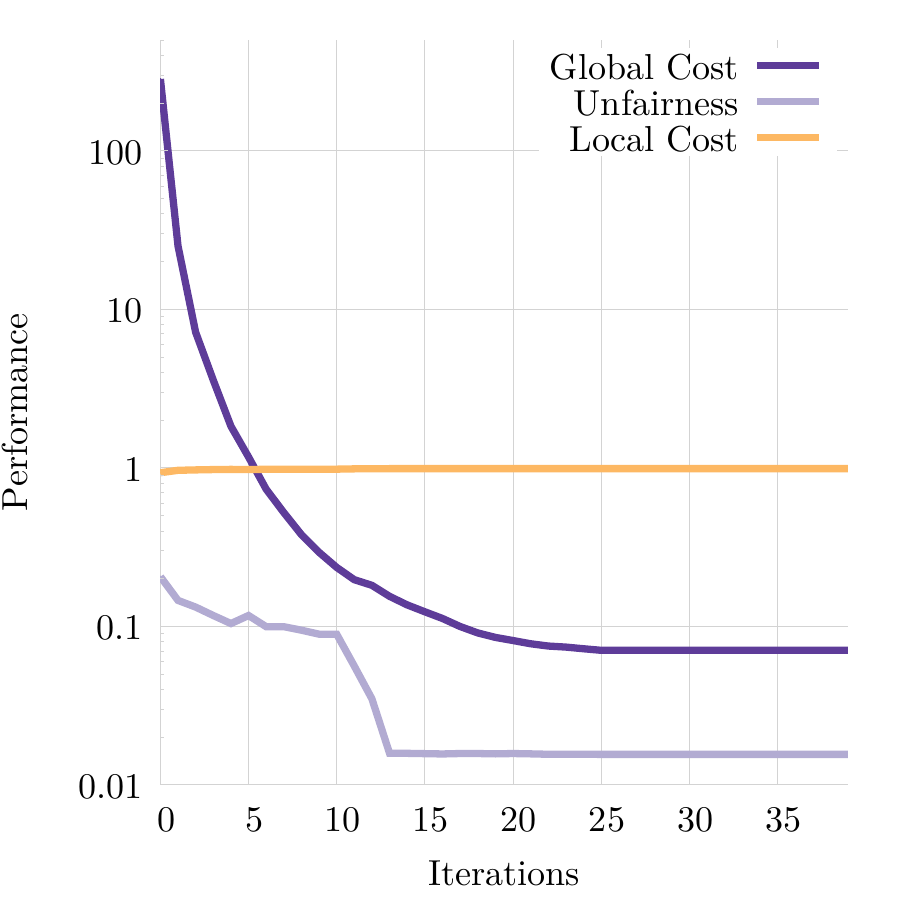}
		\label{moo:convergence-energy}}\hfill
	\caption{Convergence of three objectives over \iepos runtime for $\gamma=0.5$, $\alpha=0.5$, $\beta=0$ and for the (a) synthetic, (b) bicycle and (c) energy datasets. Inefficiency (global), discomfort (local) and unfairness cost reduction represent performance (y-axis).}	
	\label{moo:convergence} 	
\end{figure*}

Figure~\ref{moo:graph-all-datasets} illustrates the discomfort costs of the agents over the hierarchical tree topology, when optimizing for each of the three objectives, i.e. $\globalcostweight=1$, $\localcostweight=1$ or $\unfairnessweight=1$. For clarity, only trees with 100 agents are visualized. We observe that optimizing for efficiency ($\globalcostweight=1$) yields inequalities to how discomfort cost is experienced by different agents (left column). Minimizing the discomfort cost ($\localcostweight=1$) results in equal discomfort costs but inefficiency is ignored and there is no system-wide coordination among agents (middle column). Yet, once we optimize for fairness ($\unfairnessweight=1$), significant efficiency is preserved, while agents show a much higher homogeneous experience of discomfort cost (right column).

\begin{figure}[!htb]
	\centering
	\includegraphics[width=0.4\textwidth]{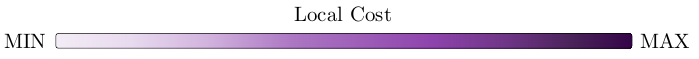}\\
	\textbf{Synthetic}\\
	
	\begin{subfigure}{0.32\columnwidth}
		\includegraphics[width=\linewidth]{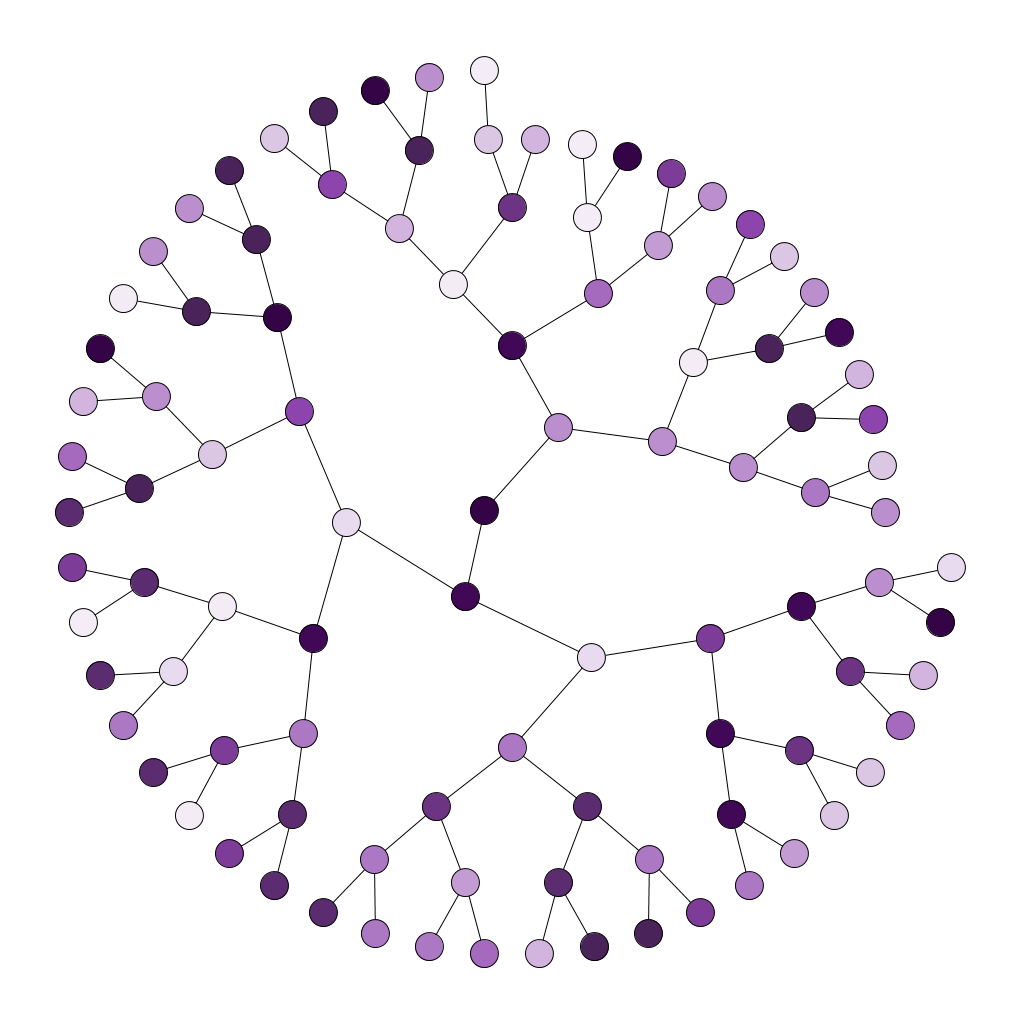}
		\caption{Efficiency ($\globalcostweight=1$)}
		\label{moo:graph-global-cost-gaussian}
	\end{subfigure}
	\hfill
	\begin{subfigure}{0.32\columnwidth}
		\includegraphics[width=\linewidth]{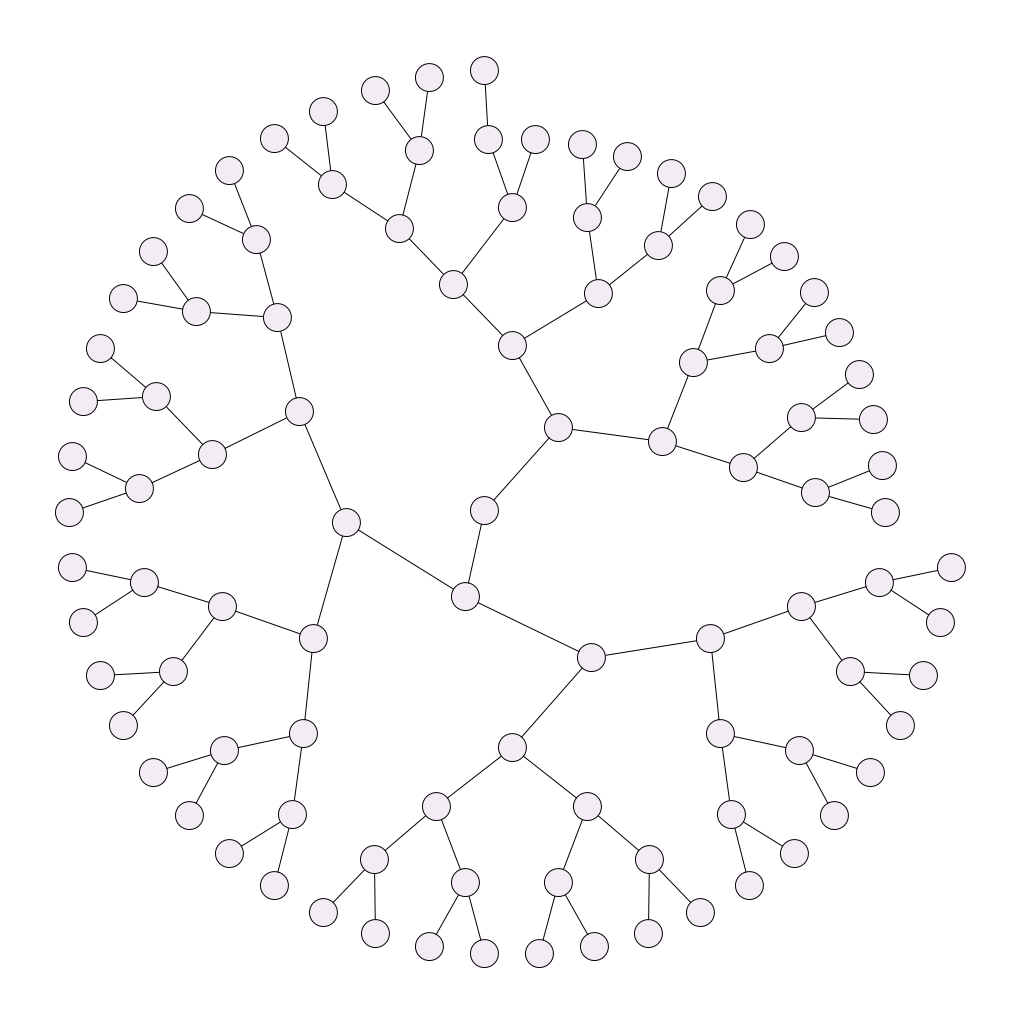}
		\caption{Comfort ($\localcostweight=1$)}
		\label{moo:graph-local-cost-gaussian}
		
	\end{subfigure}
	\hfill
	\begin{subfigure}{0.32\columnwidth}
		\includegraphics[width=\linewidth]{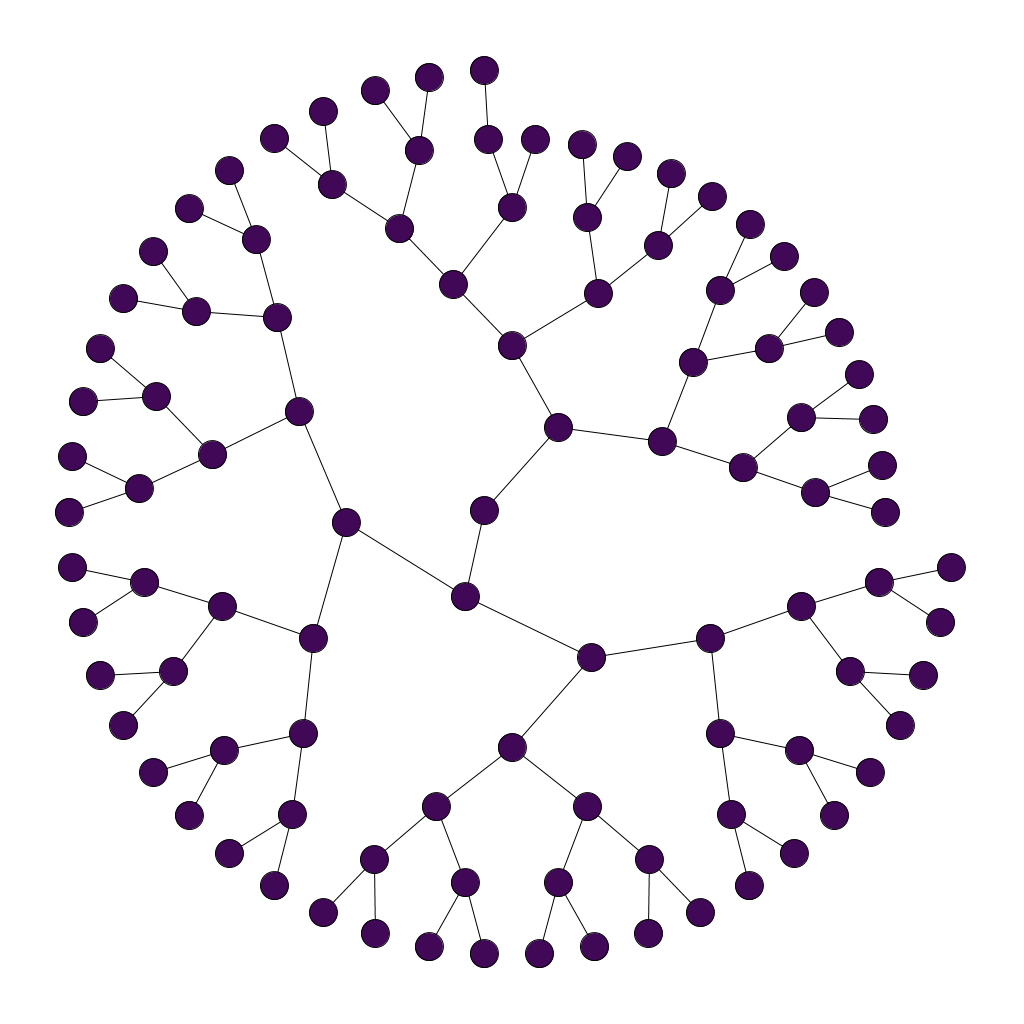}
		\caption{Fairness ($\unfairnessweight=1$)}
		\label{moo:graph-unfairness-gaussian}
	\end{subfigure}\\
	
	\textbf{Bicycle}\\
	
	\begin{subfigure}{0.32\columnwidth}
		\includegraphics[width=\linewidth]{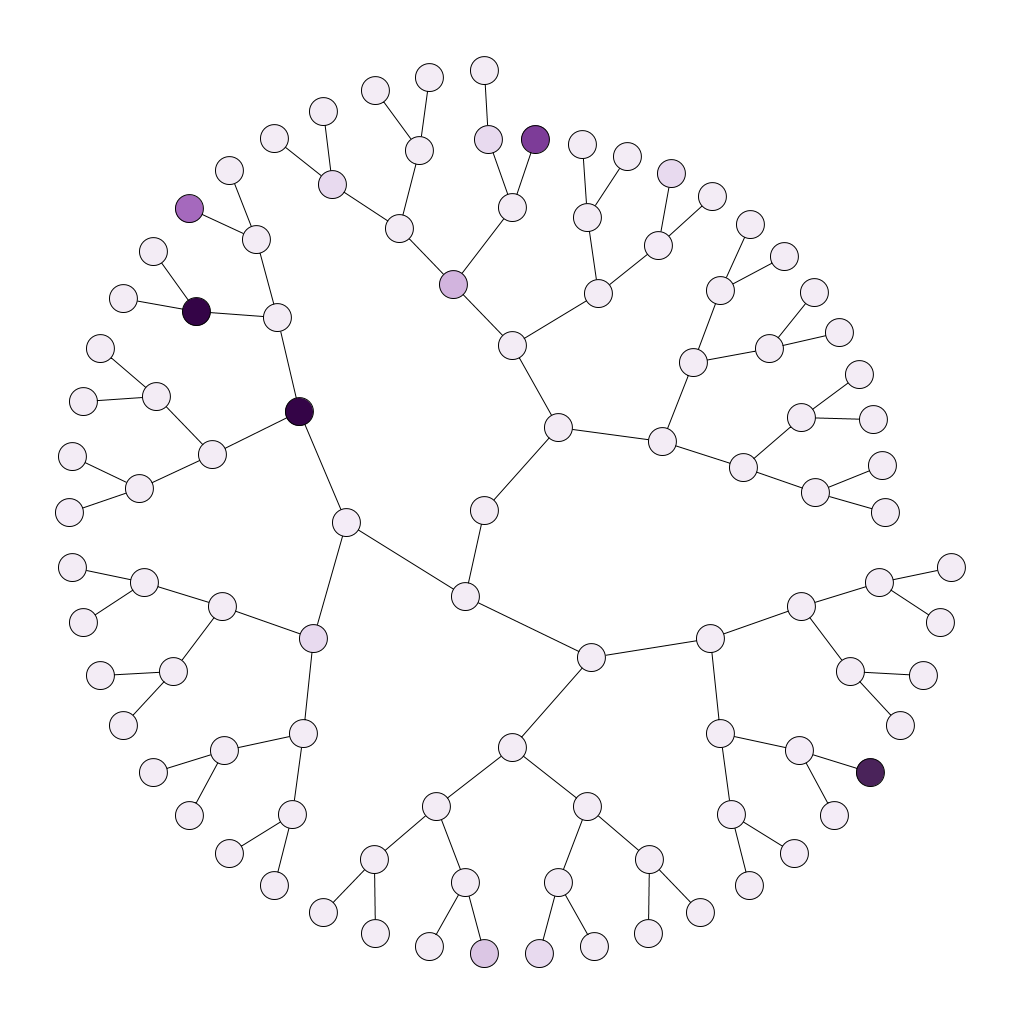}
		\caption{Efficiency ($\globalcostweight=1$)}
		\label{moo:graph-global-cost-bicycle}
	\end{subfigure}
	\hfill
	\begin{subfigure}{0.32\columnwidth}
		\includegraphics[width=\linewidth]{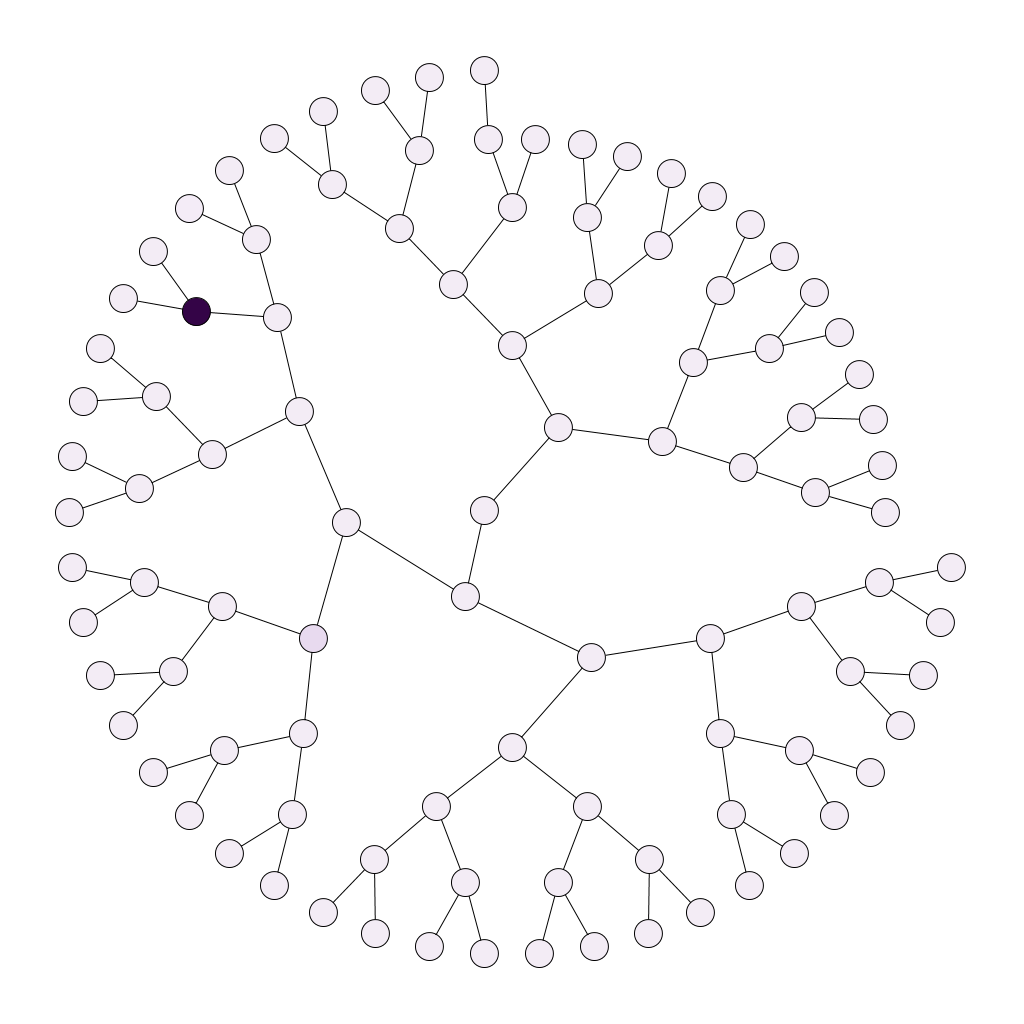}
		\caption{Comfort ($\localcostweight=1$)}
		\label{moo:graph-local-cost-bicycle}
	\end{subfigure}
	\hfill
	\begin{subfigure}{0.32\columnwidth}
		\includegraphics[width=\linewidth]{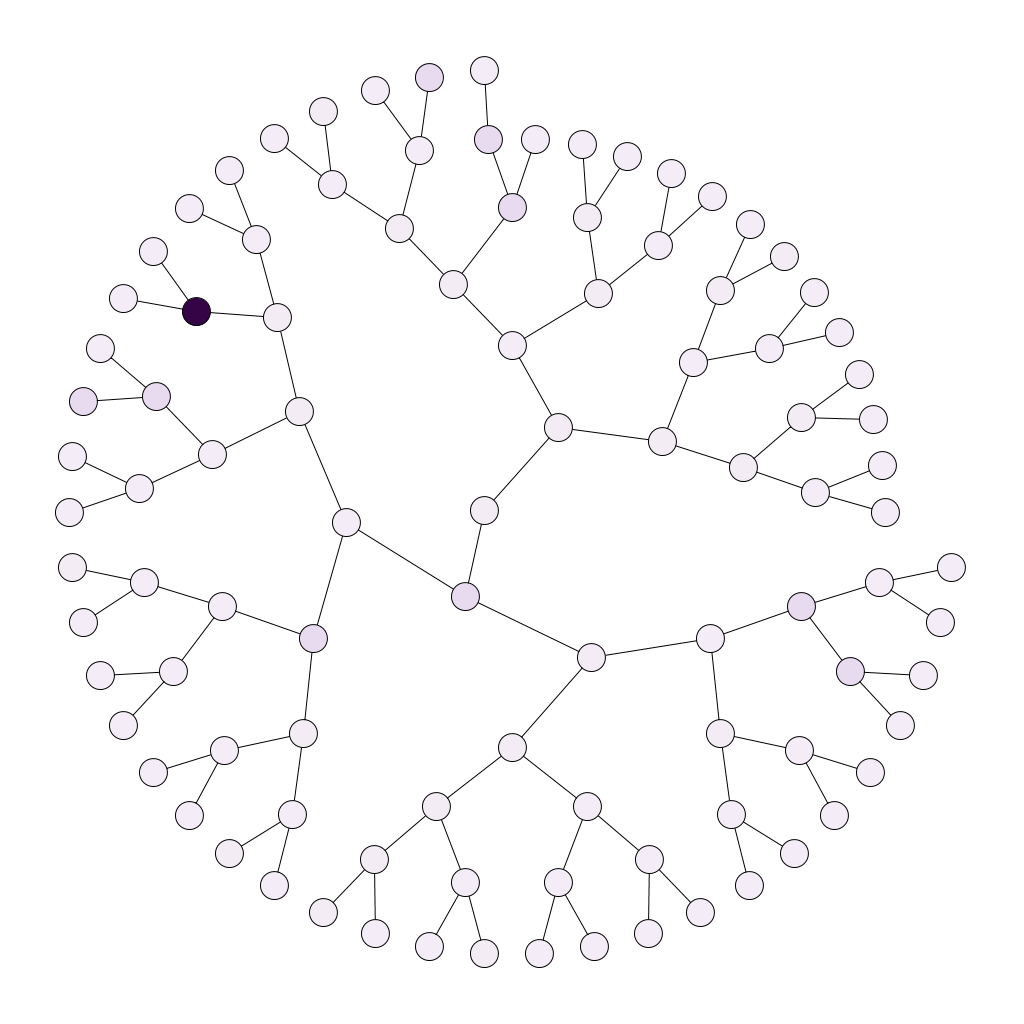}
		\caption{Fairness ($\unfairnessweight=1$)}
		\label{moo:graph-unfairness-bicycle}
	\end{subfigure}
	
	\textbf{Energy}\\
	
	\begin{subfigure}{0.32\columnwidth}
		\includegraphics[width=\linewidth]{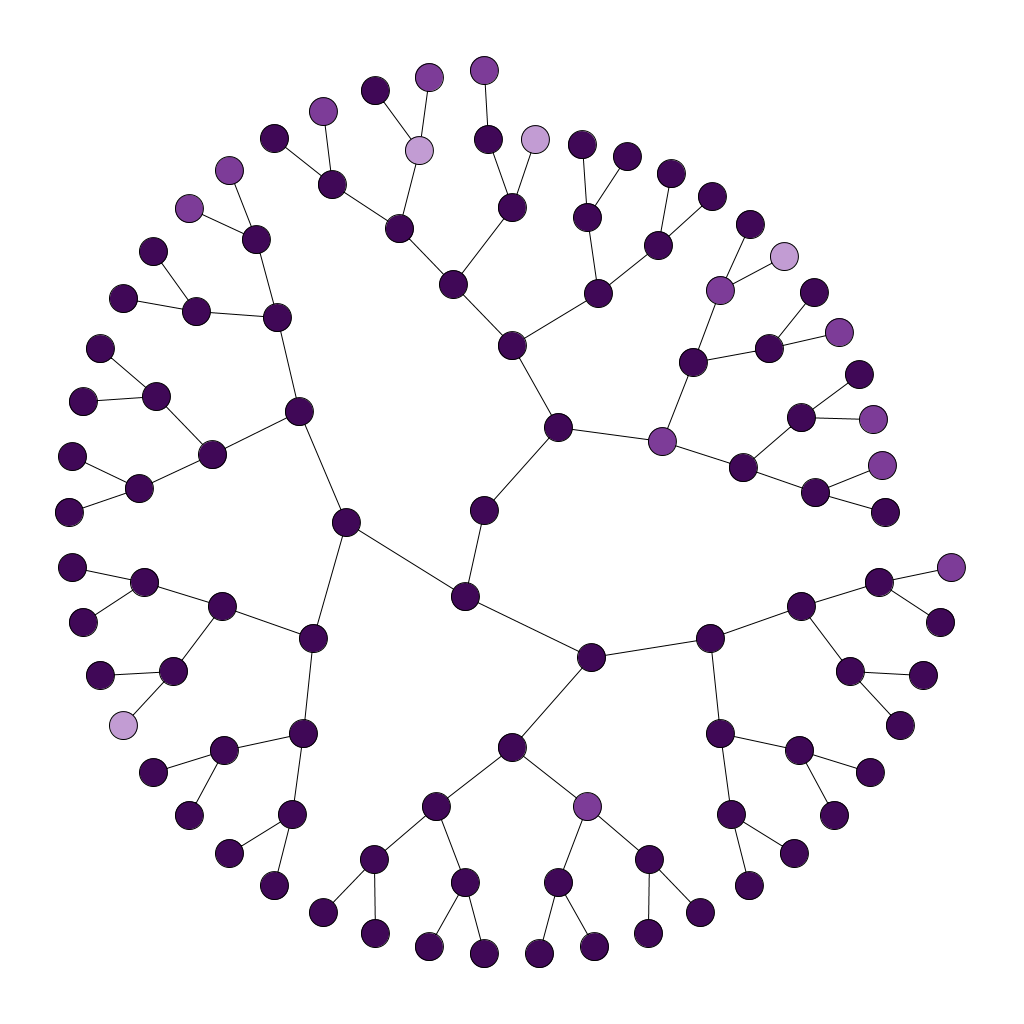}
		\caption{Efficiency ($\globalcostweight=1$)}
		\label{moo:graph-global-cost-energy}
	\end{subfigure}
	\hfill
	\begin{subfigure}{0.32\columnwidth}
		\includegraphics[width=\linewidth]{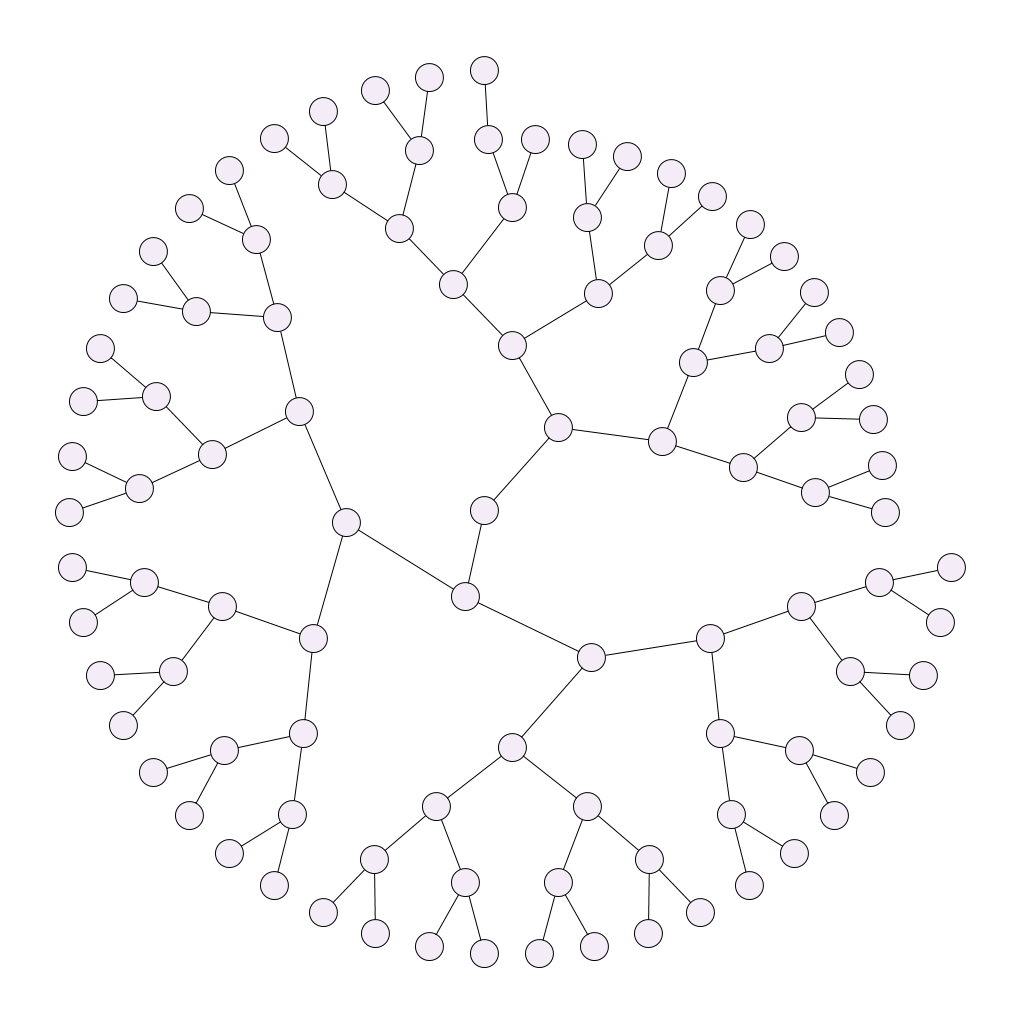}
		\caption{Comfort ($\localcostweight=1$)}
		\label{moo:graph-local-cost-energy}
	\end{subfigure}
	\hfill
	\begin{subfigure}{0.32\columnwidth}
		\includegraphics[width=\linewidth]{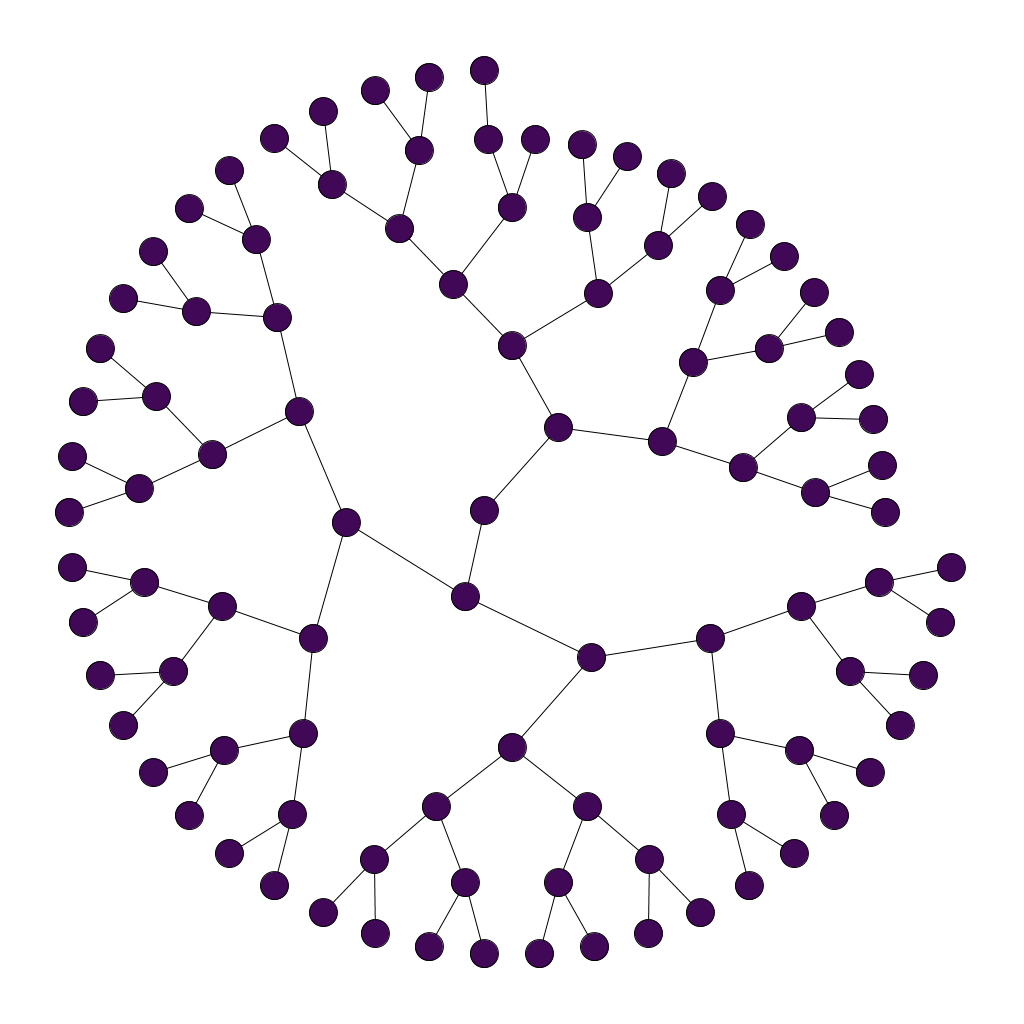}
		\caption{Fairness ($\unfairnessweight=1$)}
		\label{moo:graph-unfairness-energy}
	\end{subfigure}
	
	\caption{Discomfort (local) cost within the tree topology at the last optimization iteration and for different datasets.}
	\label{moo:graph-all-datasets}
	
\end{figure}

The Pareto frontier and the set of Pareto non-optimal solutions for all three datasets are visualized in Figure~\ref{moo:pareto-region}. Each solution is represented as a point in a 3-dimensional space, with each axis corresponding to one objective. The spread of values and the respective Pareto frontier provide insights of the trade-off between those objectives. These values are generated from the grid search of the weights as shown in Figure~\ref{moo:all-datasets}. The costs of the solutions for each parameter combination are shown along with the marked Pareto frontier solutions. 

\begin{figure*}[!htb]
	\centering	
	\subfloat[Synthetic]{%
		\includegraphics[width=0.325\linewidth]{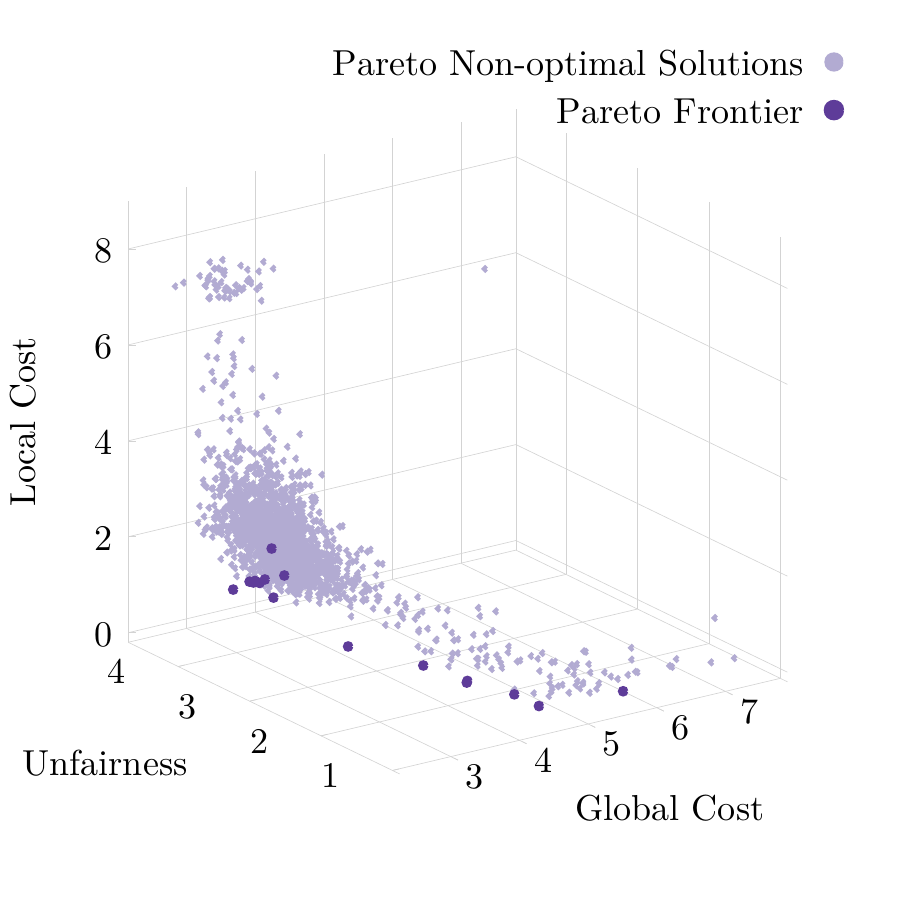}
		\label{moo:pareto-region-gaussian}}\hfill
	\subfloat[Bicycle]{%
		\includegraphics[width=0.325\linewidth]{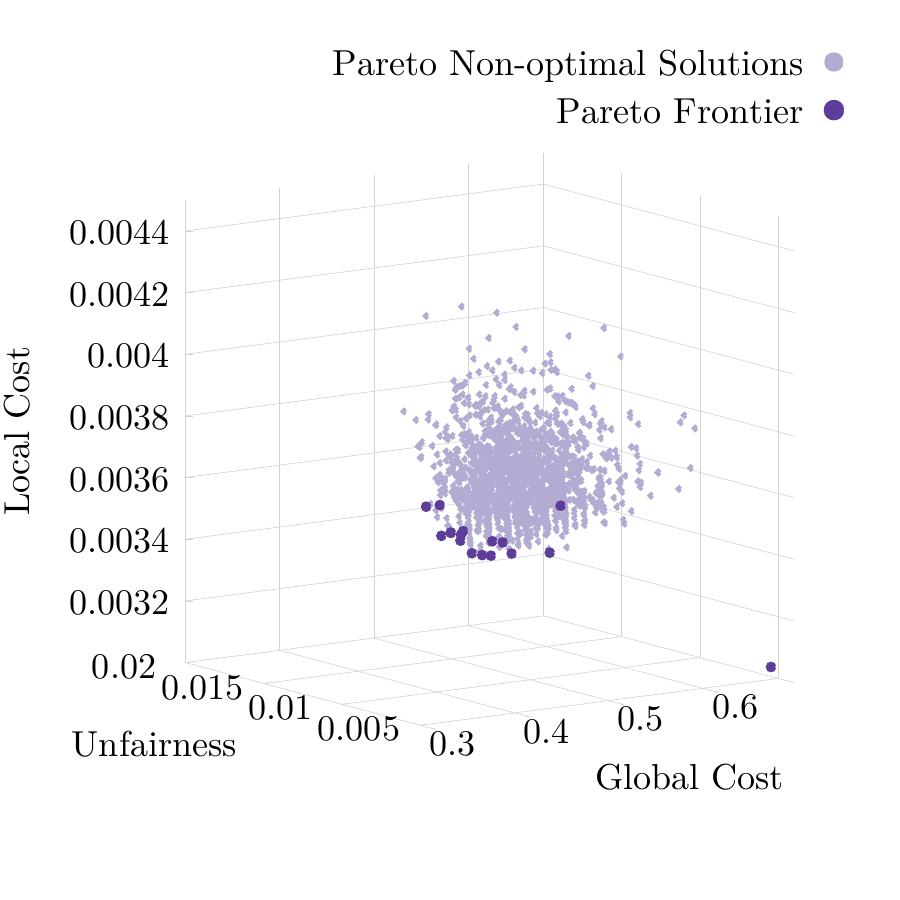}
		\label{moo:pareto-region-bicycle}}\hfill
	\subfloat[Energy]{%
		\includegraphics[width=0.325\linewidth]{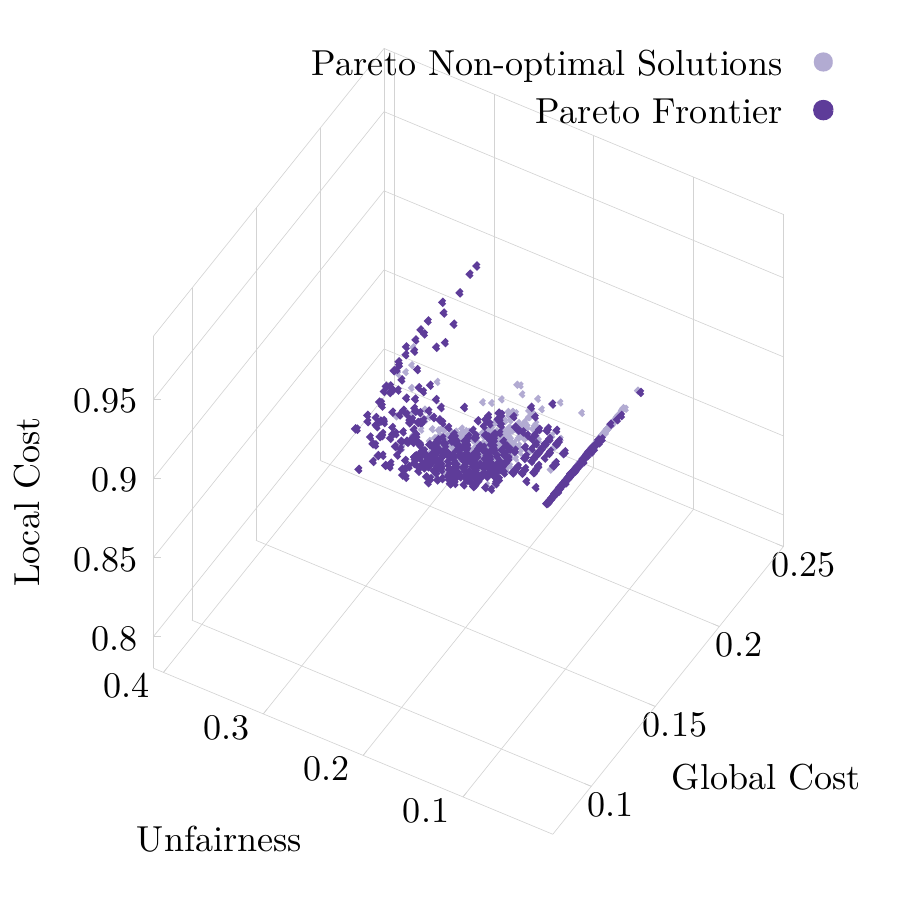}
		\label{moo:pareto-region-energy}}\hfill
	\caption{The feasible regions visualized for the (a) synthetic, (b) bicycle and (c) energy datasets. Inefficiency (global), discomfort (local) and unfairness cost are the three dimensions of the optimization space.}	
	\label{moo:pareto-region} 	
\end{figure*}

\begin{figure}[!htb]
	\centering
	\textbf{Synthetic}\\
	\subfloat[Inefficiency cost]{%
		\includegraphics[width=0.315\columnwidth]{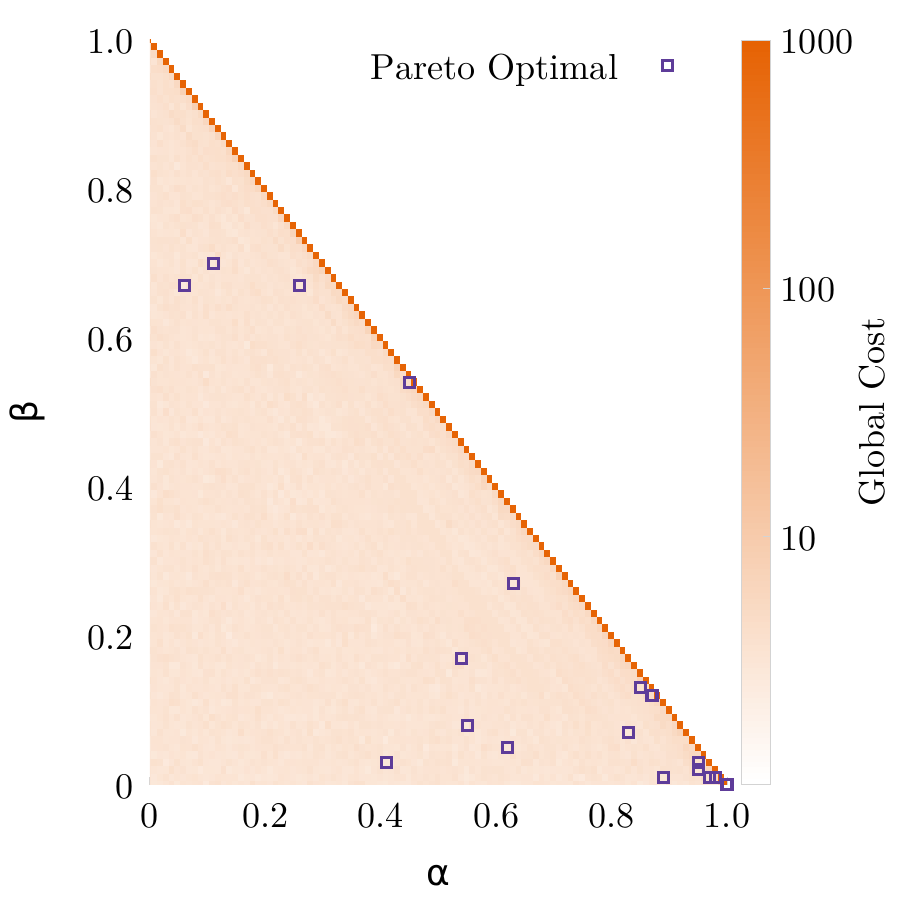}
		\label{moo:all-global-cost-gaussian}}
	\hfill
	\subfloat[Discomfort cost]{%
		\includegraphics[width=0.315\columnwidth]{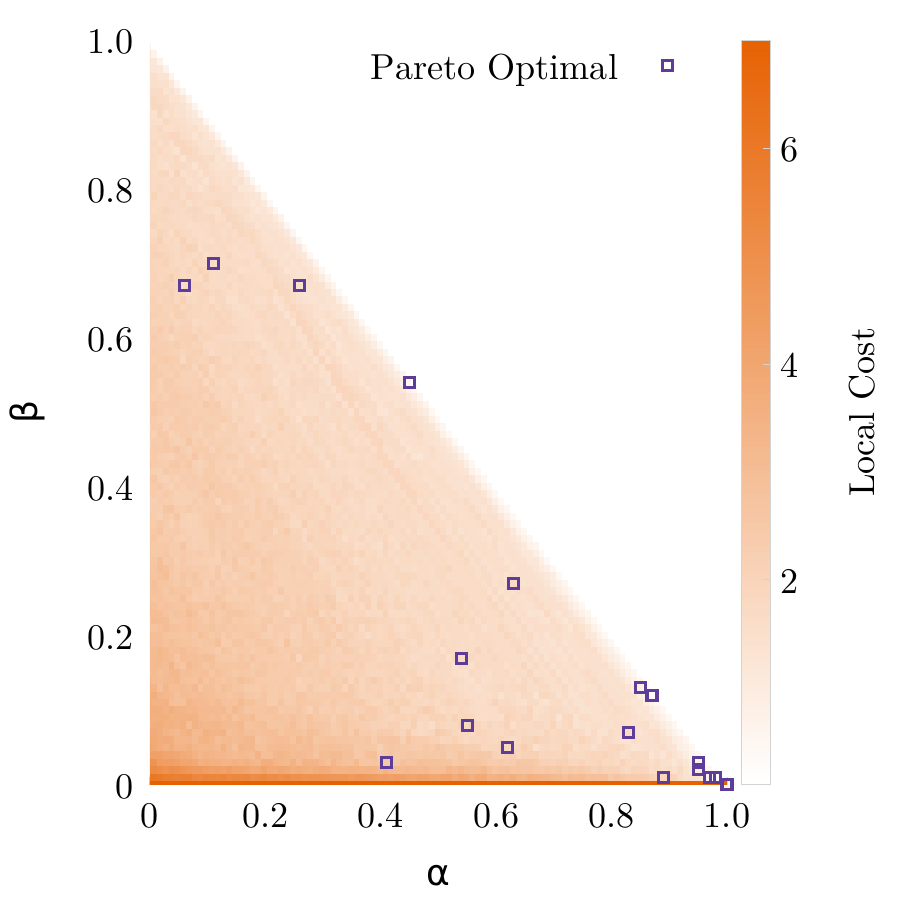}
		\label{moo:all-local-cost-gaussian}}
	\hfill
	\subfloat[Unfairness cost]{%
		\includegraphics[width=0.315\columnwidth]{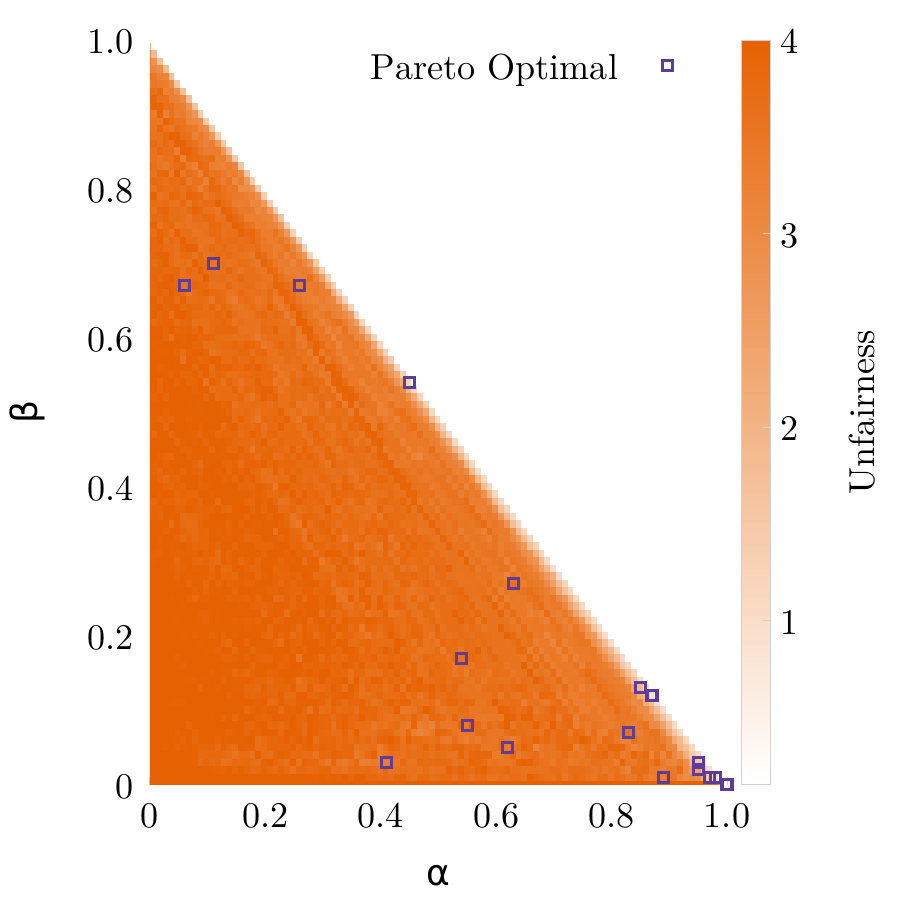}
		\label{moo:all-unfairness-gaussian}}\\
	\textbf{Bicycle}\\
	\subfloat[Inefficiency cost]{%
		\includegraphics[width=0.315\columnwidth]{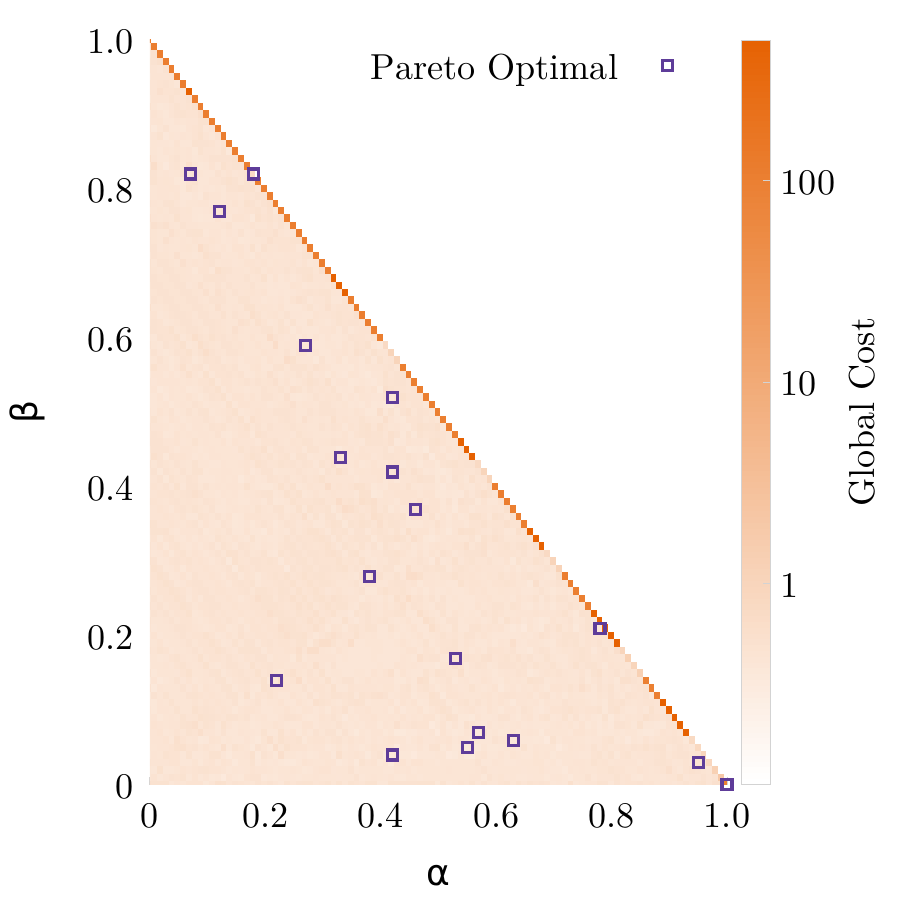}
		\label{moo:all-global-cost-bicycle}}
	\hfill
	\subfloat[Discomfort cost]{%
		\includegraphics[width=0.315\columnwidth]{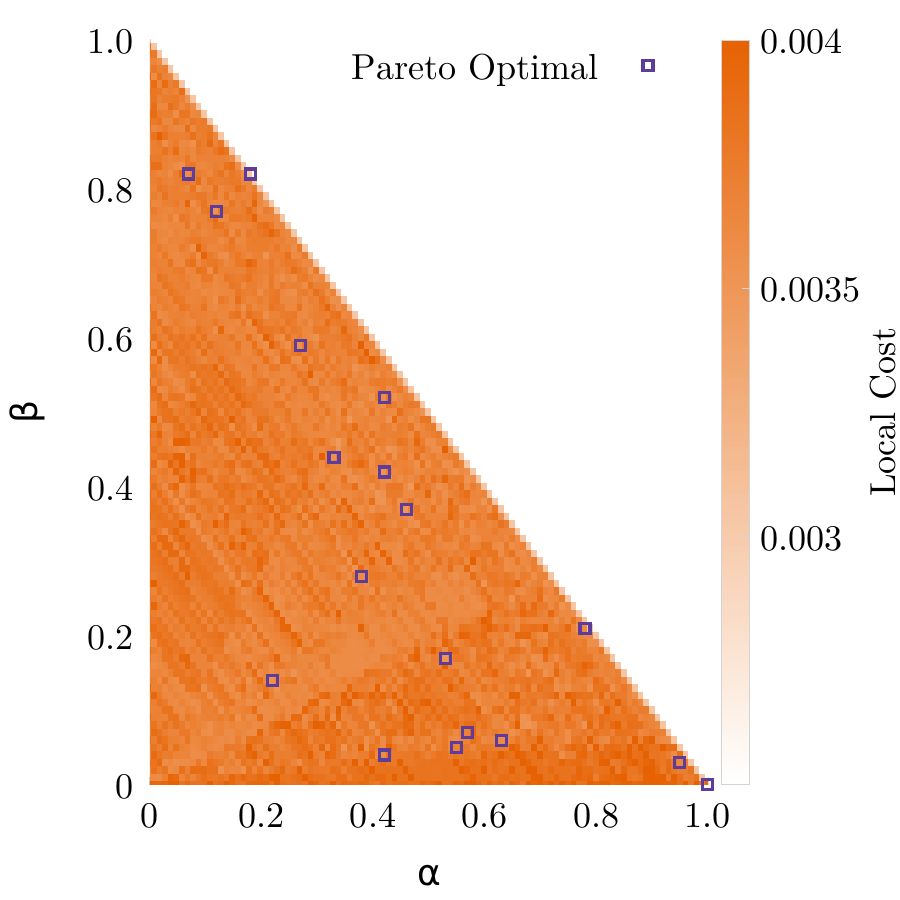}
		\label{moo:all-local-cost-bicycle}}
	\hfill
	\subfloat[Unfairness cost]{%
		\includegraphics[width=0.315\columnwidth]{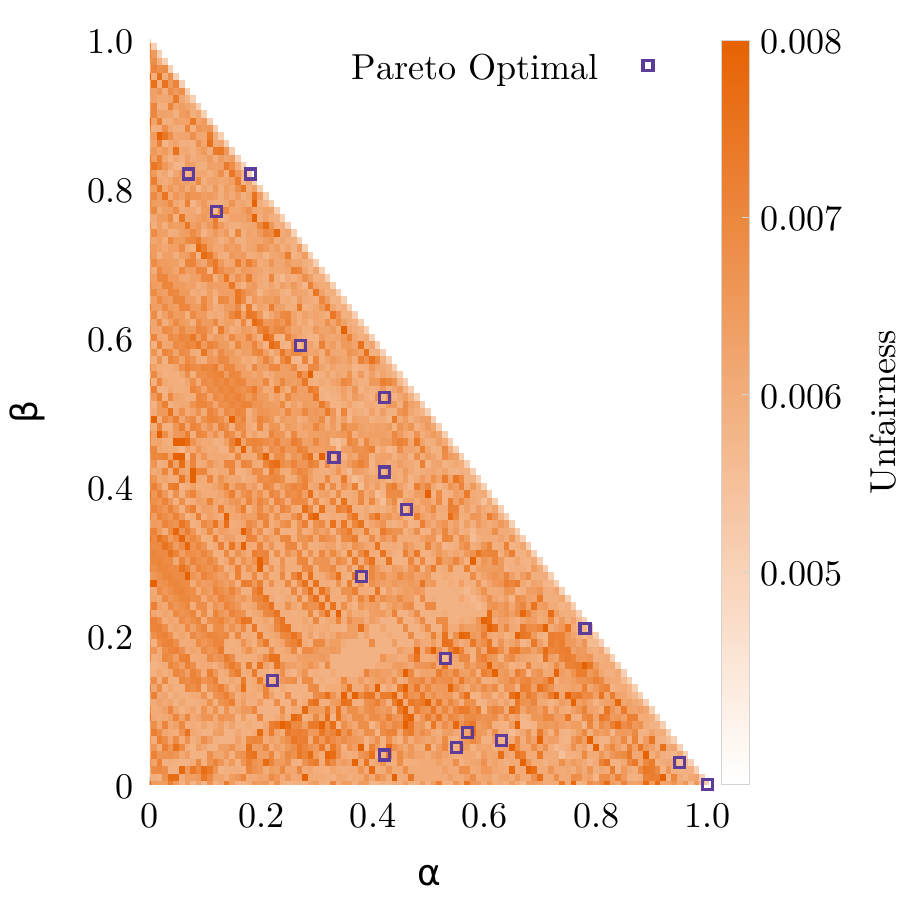}
		\label{moo:all-unfairness-bicycle}}\\
	\textbf{Energy}\\
	\subfloat[Inefficiency cost]{%
		\includegraphics[width=0.315\columnwidth]{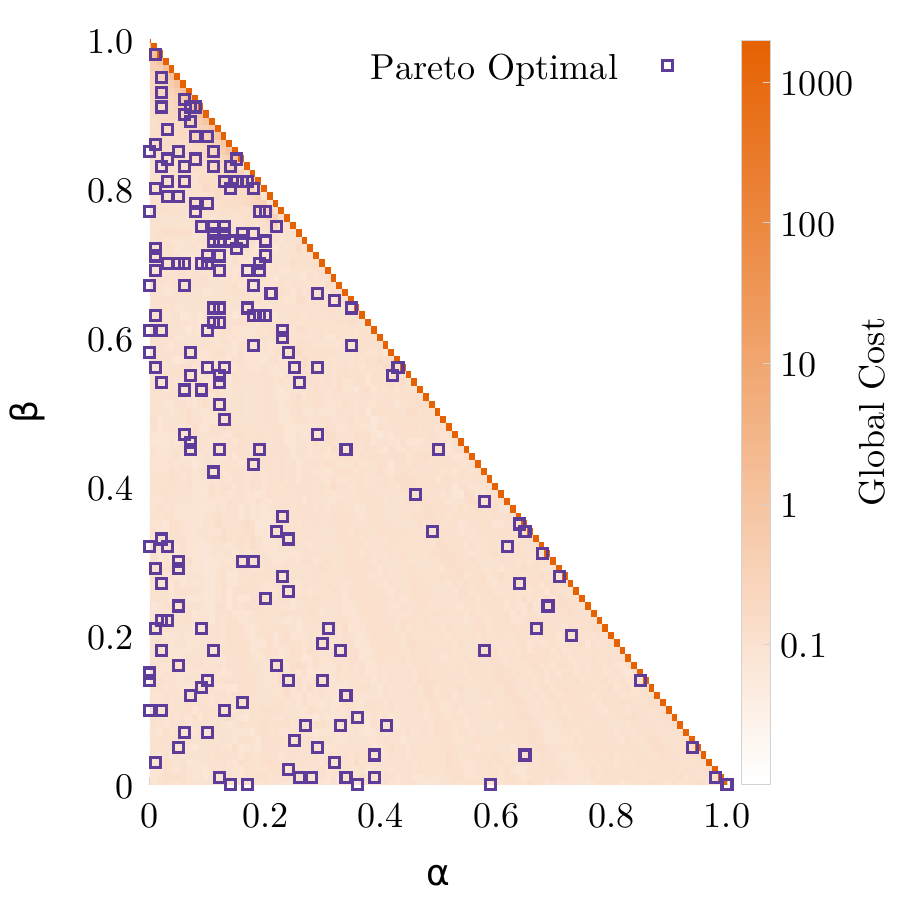}
		\label{moo:all-global-cost-energy}}
	\hfill
	\subfloat[Discomfort cost]{%
		\includegraphics[width=0.315\columnwidth]{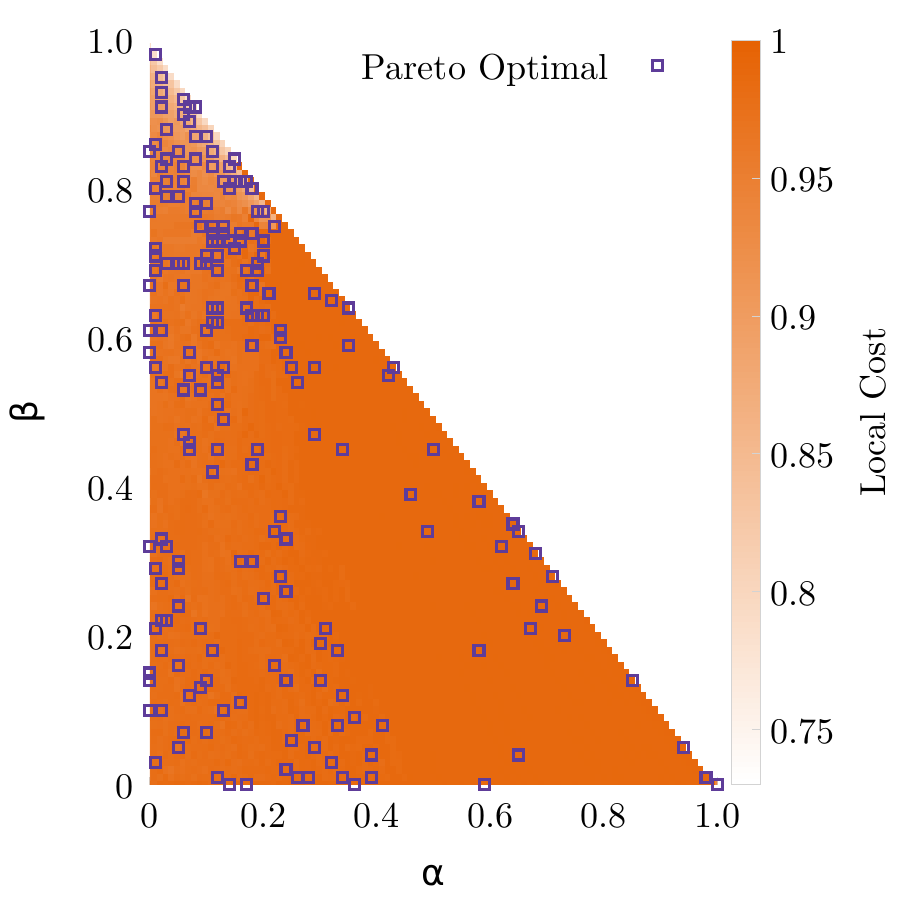}
		\label{moo:all-local-cost-energy}}
	\hfill
	\subfloat[Unfairness cost]{%
		\includegraphics[width=0.315\columnwidth]{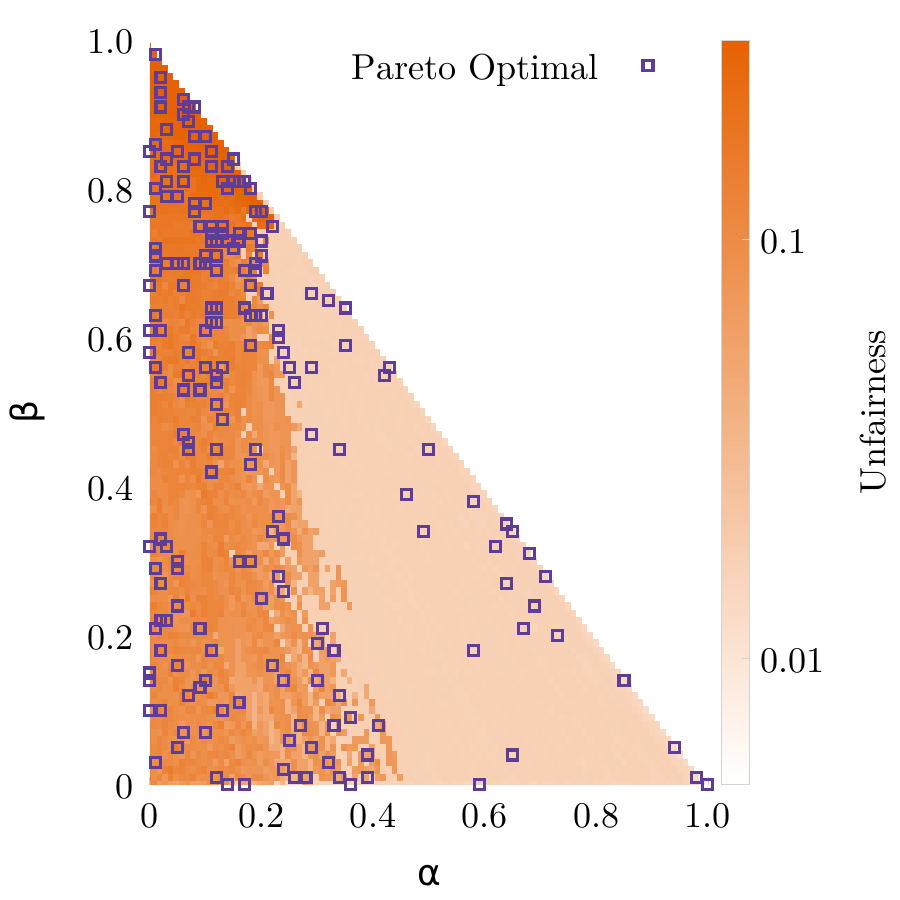}
		\label{moo:all-unfairness-energy}}\\
	
	\caption{Feasible regions of the three datasets with respect to the three objectives.}
	\label{moo:all-datasets}
\end{figure}

For the bicycle dataset (Figure~\ref{moo:pareto-region-bicycle} and Figure~\ref{moo:all-datasets}d-\ref{moo:all-datasets}f), the solutions are densely concentrated in one region with 17 solutions (0.33\%) comprising the Pareto frontier. On average, the spread between different solutions is limited across all three axes, with the majority of solutions grouped in one region of the Pareto frontier. A similar observation is made for the energy dataset (Figure~\ref{moo:pareto-region-energy} and Figure~\ref{moo:all-datasets}g-\ref{moo:all-datasets}i). However, the variation in unfairness across the frontier is far more pronounced, as the Pareto front extends primarily along the unfairness axis. Finally, the synthetic dataset (Figure~\ref{moo:pareto-region-gaussian} and Figure~\ref{moo:all-datasets}a-\ref{moo:all-datasets}c) illustrates a trade-off among the three objectives. The goal of employing synthetic data is to visualize such a trade-off in a controlled environment. For the observed orthogonality of the three objectives, the Pareto frontier is spread across the unfairness and inefficiency cost objectives.

\subsection{Discussion}\label{Experiments:Discussion}

The results allow us to draw a few fundamental conclusions. As shown and discussed in Section~\ref{Experiments:Results}, all three cost functions converge in no more than 35 iterations. This proves the fact that it is possible to minimize the inefficiency and discomfort costs in a collaborative (decentralized) manner, while keeping the system balanced, without the need to place significant disproportional burden on specific agents. Results are robust against threats to validity as shown in Section~\ref{app:validity}. 

The second observation can be made regarding the trade-off between orthogonal objectives. 	Depending on the characteristics of the dataset, such as the magnitudes of the costs and the values in the plans, optimization may be more sensitive to certain values of the weights. In the three available datasets, the trade-off between the objectives can be best controlled by fine-tuning the values of the weights in the range [0.9, 1]. 

A strong polarization in the inefficiency cost values is noticeable in all three datasets, and occurs for the same set of parameter triplets. The reason for this is the different orders of magnitude for the unfairness and discomfort cost compared to the inefficiency cost. Consequently, the inefficiency cost is the most dominant term that governs minimization, but as $\globalcostweight \rightarrow 0$, the other two terms gain dominance. This shows the need of the introduced fairness model, i.e. how optimization that takes into consideration only system-wide efficiency and agents' comfort may lead to a distribution of discomfort among agents that is heavily skewed towards a small subset of the agents, creating a strong system polarization. The findings also empirically showcase how the proposed approach can mitigate against uneven distribution of discomfort between agents and how the unfairness weight $\alpha$ influences this distribution.

\section{Conclusion and Future Work}\label{Section:Conclusions}

This article illustrates a model for adaptive optimization of three orthogonal objectives in a collaborative manner. Although the calculation of unfairness requires knowledge of the average discomfort cost of the agents' choice, we prove that it is possible to minimize unfairness using approximations, hence reusing information already available without adding unnecessary communication bottleneck. By applying the model to I-EPOS as an exemplary algorithm, we demonstrate that it is possible to simultaneously optimize system-wide efficiency, individuals' comfort, and the equitable distribution of discomfort costs in a fully decentralized way.

Via experiments on one synthetic and two real-world datasets, we show that all three objectives converge reliably within fewer than 35 iterations. Unfairness minimization prevents selective overburdening of individual agents. This is a prominent behavior in the bicycle dataset, where optimizing for system-wide efficiency alone causes disproportionate discomfort for a small subset of agents, creating risks for system operations. The Pareto frontier analysis confirms that the three objectives are indeed orthogonal across all datasets. Fine-tuning the objective weights is key to control trade-offs. 

Future work will study alternative fairness measures, heterogeneous prioritization of objectives within the population of agents, and how fairer optimization solutions can incentivize more altruistic behavior, meaning agents that are ready to accept higher discomfort values to improve efficiency. 

\section*{Acknowledgment}
This research is supported by a UKRI Future Leaders Fellowship (MR-/W009560/1): \emph{Digitally Assisted Collective Governance of Smart City Commons--ARTIO} and the EPSRC IAA project \emph{Collective Learning Optimisation Algorithm}.

\FloatBarrier
\clearpage

\bibliographystyle{IEEEtran}
\bibliography{references}


\appendices

\section{Other Fairness Measures}
\label{Appendix:Fairness}

\subsection{Jain's Fairness Index}
Raj Jain's fairness index \cite{sediq_optimal_2012} was developed to determine fair resource allocation in shared systems. It can be expressed directly in terms of the aggregated quantities defined in Section III as  $J : \mathbb{R}^d_+ \to [\frac{1}{N}, 1]$:
  \[
      J^{(t)} = \frac{\left(k_a^{(t)}\right)^2}{N \cdot K_a^{(t)}}
  \]
  where $K_a^{(t)} = \sum_{c \in \mathcal{T}_a} (l(s_c^{(t)}))^2$ and
  $k_a^{(t)} = \sum_{c \in \mathcal{T}_a} l(s_c^{(t)})$. A value of $J^{(t)} = 1$
  indicates perfect fairness, while $J^{(t)} = \frac{1}{N}$ indicates maximum
  unfairness.

  \begin{lemma}[Consistency of Jain's Fairness Index]
  Under the same conditions as Corollary~1, with $\mathbb{E}[X^2] > 0$,
  \[
      J^{(t)} \xrightarrow{a.s.} \frac{(\mathbb{E}[X])^2}{\mathbb{E}[X^2]}
      \quad \text{as } N \to \infty.
  \]
  \end{lemma}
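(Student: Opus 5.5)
The plan is to rewrite $J^{(t)}$ as a continuous function of the two empirical moments already handled in Corollary~1, and then pass to the limit with the strong law of large numbers and the continuous mapping theorem. With $\mathcal{T}_a=\mathcal{A}$ and $X_c = l(s_c^{(t)})$, divide numerator and denominator by $N^2$:
\[
J^{(t)} = \frac{\left(\frac{1}{N}k_a^{(t)}\right)^2}{\frac{1}{N}K_a^{(t)}} = \frac{\bar{X}_N^2}{\overline{X^2}_N},
\]
where $\bar{X}_N = \frac{1}{N}\sum_{c} X_c$ and $\overline{X^2}_N = \frac{1}{N}\sum_c X_c^2$. This is an exact algebraic identity whenever $K_a^{(t)}>0$. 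No decentralization issue arises, because Lemma~1 already fixes the evaluation over the whole tree.

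For the limit, note that $\mathbb{E}[X^2]<\infty$ implies $\mathbb{E}|X|<\infty$, so the strong law applies to both sequences. It gives $\bar{X}_N \to \mathbb{E}[X]$ and $\overline{X^2}_N \to \mathbb{E}[X^2]$ almost surely, as already recorded in the proof of Corollary~1. Intersecting the two probability-one events, both convergences hold simultaneously on a single event $\Omega_0$ with $\mathbb{P}(\Omega_0)=1$.

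Next, apply the continuous mapping argument pathwise on $\Omega_0$ using $h(x,y)=x^2/y$. This map is continuous on $\mathbb{R}\times(0,\infty)$, and the limit point $(\mathbb{E}[X],\mathbb{E}[X^2])$ lies in that domain precisely because of the extra hypothesis $\mathbb{E}[X^2]>0$. Hence $J^{(t)} = h(\bar{X}_N,\overline{X^2}_N) \to (\mathbb{E}[X])^2/\mathbb{E}[X^2]$ on $\Omega_0$, which is the claim.

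The main obstacle is the denominator, since $J^{(t)}$ is undefined whenever $K_a^{(t)}=0$, i.e. when every sampled cost is zero. I would handle it as follows. On $\Omega_0$, $\overline{X^2}_N \to \mathbb{E}[X^2]>0$, so there is a random $N_0(\omega)$ with $\overline{X^2}_N > \mathbb{E}[X^2]/2 > 0$ for all $N\ge N_0(\omega)$. Thus $J^{(t)}$ is well defined for all sufficiently large $N$ almost surely, and the convergence statement concerns only the tail of the sequence. Any convention for the finitely many early terms, for example setting $J^{(t)}=1$ when all costs vanish, is then irrelevant.
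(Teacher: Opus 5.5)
Your proposal is correct and takes essentially the same route as the paper: rewrite $J^{(t)}$ as $\bigl(\frac{1}{N}k_a^{(t)}\bigr)^2/\bigl(\frac{1}{N}K_a^{(t)}\bigr)$, apply the strong law of large numbers to both empirical moments, and conclude by continuous mapping, using $\mathbb{E}[X^2]>0$. Your work with the common probability-one event and the random $N_0(\omega)$ beyond which the denominator stays positive makes precise what the paper compresses into the phrase ``bounded away from zero almost surely.''
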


  \begin{proof}
  By definition,
  \[
      J^{(t)} = \frac{\left(\frac{1}{N}k_a^{(t)}\right)^2}{\frac{1}{N}K_a^{(t)}}.
  \]
  By the strong law of large numbers, $\frac{1}{N}k_a^{(t)} \xrightarrow{a.s.}
  \mathbb{E}[X]$ and $\frac{1}{N}K_a^{(t)} \xrightarrow{a.s.} \mathbb{E}[X^2]$.
  Since $\mathbb{E}[X^2] > 0$ by assumption, the denominator is bounded away
  from zero almost surely, and the result follows by the continuous mapping theorem.
  \end{proof}

\section{Datasets and Threats to Validity}
\label{Appendx:Datasets}
\subsection{Energy}\label{Appendix:Energy}
This dataset was obtained by disaggregation~\cite{pournaras2017} of the \textit{simulated zonal power transmission in the Pacific Northwest} dataset. The resulting dataset contains power loads of 5600 users. Each agent has exactly 10 possible plans, each of length 144. The first is the original disaggregated load; the next 3 are obtained by applying the SHUFFLE generation scheme to the original; the next 3 are obtained by applying the SWAP-15 generation scheme; and the final 3 are obtained by applying the SWAP-30 generation scheme. The first plan is the most preferred, as it is the only real-world plan and therefore has a preference score of $1.0$. The SHUFFLE generation scheme randomly shuffles the values in the first plan. Since this scheme perturbs all 144 values of the original plan, the SHUFFLE plans have the lowest preference score, equal to $\frac{1}{144}$. The SWAP-15 generation scheme randomly selects 15 pairs of values of the original plan to swap, and their preference is adjusted to $\frac{1}{15}$. Similarly, the SWAP-30 scheme randomly selects 30 pairs to swap, and their preference score is $\frac{1}{30}$. Consequently, the mean and the standard deviation of the possible plans are constant for each agent.

The preference scores are used for discomfort cost optimization, which entails selecting plans with high preference scores relative to the other efficinecy objectives. As I-EPOS is a minimization algorithm, preference scores are converted to their respective \textit{complements} called \textit{costs} as follows: if the preference score is $h$, then its cost is $1-h$. Note that each cost value is between 0 and 1. 

\subsection{Bicycle}\label{Appendix:Bicycle}
This dataset was generated from the data obtained from the \textit{Hubway Data Visualization Challenge}\footnote{Hubway Data Challenge, Hubway, Accessible at: http://hubwaydatachallenge.org. Last accessed in July 2026.} that consists of the trip records of the Hubway bike-sharing system in Paris. The original data do not contain user identification, but include more general user information, such as zip code, year of birth, and gender. All records with the same values of these three identification keys are assumed to be one user, and a subset of 1000 such users is included in the dataset. A possible plan is a sequence of values, one for each bike station, that indicates the difference between the numbers of incoming and outgoing trips at that station. For example, if a user cycled between stations 1 and 3, the corresponding possible plan is $(-1, 0, 1, 0, ...)$. Different trips for a single user are encoded as distinct possible plans. Finally, each possible plan (a trip) is assigned a \textit{score} indicating the likelihood that a user does not make the trip. Note that scores of all plans in the whole dataset belong to the range $[0, 1]$.

\subsection{Synthetic}\label{Appendix:Synthetic}
The values in this dataset are drawn from a Normal distribution. The dataset consists of 1000 agents, each having 16 possible plans. The possible plan is a vector of length 100. Each possible plan is coupled with a score that indicates the index of the possible plan, and this score is used as the discomfort cost of the plan. In other words, the first plan has the score of 0, the second one the score of 1 and so on. It is used as a proof of concept since it is the most flexible dataset.

\subsection{Additional Experimental Details}\label{Appendix:Experiments}

\begin{table}[!htb]
	\centering
	\caption{Experimental settings for obtaining the feasible region as presented in Section \ref{Section:Experiments}.}
	\label{table:feasible-region-settings}
	\resizebox{\columnwidth}{!}{
    \begin{tabular}{l l l l l}
		\toprule
		\multirow{2}{*}{Parameters} && \multicolumn{3}{c}{\textit{Datasets}} \\
		\cmidrule{3-5}
		&&	Bicycle	& Energy & Synthetic \\
		\midrule
		Efficnecy objective \globalcostfunc &&	\multicolumn{3}{c}{Variance} \\
		Weights && \multicolumn{3}{c}{\makecell[c]{\globalcostweight, \unfairnessweight and \localcostweight are varied between 0 and 1 with step 0.01, \\ $\globalcostweight + \unfairnessweight + \localcostweight = 1$, \globalcostweight, \unfairnessweight and \localcostweight are constant over time and for all agents}} \\
		\midrule		
		Number of agents	&&	$|N| = 1000$ &	$|N| = 1000$	&	$|N| = 1000$	\\
		Number of plans		&&	$|\mathcal{P}_{a}| \leq 23, \forall a \in \mathcal{A}$ &	$|\mathcal{P}_{a}| = 10, \forall a \in \mathcal{A}$ &	$|\mathcal{P}_{a}| = 16, \forall a \in \mathcal{A}$	\\
		Dimensions of plans	&&	$d= 98$ &	$d = 144$ &	$d = 100$	\\
		Number of children && $n = 2$ & $n = 2$ & $n = 2$ \\
		Number of iterations	&&	$T = 40$	&	$T = 40$ &	$T = 40$ \\		
		\bottomrule
	\end{tabular}
    }
\end{table}

\subsection{Threats to Validity}\label{app:validity}

All experiments are repeated 1000 times with different initial relative positions of the agents in the hierarchy. This means that the order of coordination is controlled as a factor of influencing the performance. Experimental evidence is provided for for both synthetic and real-world data. The influence of other factors such as the number of children per agent, the number of plans and size of plans are studied in earlier work~\cite{pournaras2018,Nikolic2019} and are not expected to alter the conclusions of the paper. 

\end{document}